\definecolor{newcolor}{rgb}{.8,.349,.1}
\newtheorem{proposition}{Proposition}
\journal{Medical Image Analysis}
\begin{document}

\verso{Huang \textit{et~al.}}

\begin{frontmatter}

\title{Enhancing Global Sensitivity and Uncertainty Quantification in Medical Image Reconstruction with Monte Carlo Arbitrary-Masked Mamba}%

\author[1,2,3]{Jiahao \snm{Huang}\corref{cor1}\fnref{fn1}}
\ead{j.huang21@imperial.ac.uk}
\cortext[cor1]{Corresponding author: Jiahao Huang, Guang Yang}
\fntext[fn1]{Co-first author}
\author[4]{Liutao \snm{Yang}\fnref{fn1}}
\author[1,2,3]{Fanwen \snm{Wang}}
\author[1,2]{Yang \snm{Nan}}
\author[5]{Weiwen \snm{Wu}}
\author[6]{Chengyan \snm{Wang}}
\author[7,8]{Kuangyu \snm{Shi}}
\author[9]{Angelica I. \snm{Aviles-Rivero}}
\author[9]{Carola-Bibiane \snm{Sch{\"o}nlieb}}
\author[4]{Daoqiang \snm{Zhang}}
\author[1,2,3]{Guang \snm{Yang}\corref{cor1}}
\ead{g.yang@imperial.ac.uk}

\address[1]{Department of Bioengineering and Imperial-X, Imperial College London, London, United Kingdom}
\address[2]{National Heart and Lung Institute, Imperial College London, London, United Kingdom}
\address[3]{Cardiovascular Research Centre, Royal Brompton Hospital, London, United Kingdom}
\address[4]{College of Computer Science and Technology, Nanjing University of Aeronautics and Astronautics, Nanjing, China}
\address[5]{School of Biomedical Engineering, Shenzhen Campus of Sun Yat-sen University, Guangdong, China}
\address[6]{Human Phenome Institute, Fudan University, Shanghai, China}
\address[7]{Department of Nuclear Medicine, Inselspital, University of Bern, Bern, Switzerland}
\address[8]{Department of Informatics, Technical University of Munich, Munich, Germany}
\address[9]{Department of Applied Mathematics and Theoretical Physics, University of Cambridge, Cambridge, United Kingdom}

\received{N.A.}
\finalform{N.A.}
\accepted{N.A.}
\availableonline{N.A.}
\communicated{N.A.}

\begin{abstract}
Deep learning has been extensively applied in medical image reconstruction, where Convolutional Neural Networks (CNNs) and Vision Transformers (ViTs) represent the predominant paradigms, each possessing distinct advantages and inherent limitations: CNNs exhibit linear complexity with local sensitivity, whereas ViTs demonstrate quadratic complexity with global sensitivity. 
The emerging Mamba has shown superiority in learning visual representation, which combines the advantages of linear scalability and global sensitivity.
In this study, we introduce MambaMIR, an Arbitrary-Masked Mamba-based model with wavelet decomposition for joint medical image reconstruction and uncertainty estimation.
A novel Arbitrary Scan Masking (ASM) mechanism ``masks out'' redundant information to introduce randomness for further uncertainty estimation. Compared to the commonly used Monte Carlo (MC) dropout, our proposed MC-ASM provides an uncertainty map without the need for hyperparameter tuning and mitigates the performance drop typically observed when applying dropout to low-level tasks. 
For further texture preservation and better perceptual quality, we employ the wavelet transformation into MambaMIR and explore its variant based on the Generative Adversarial Network, namely MambaMIR-GAN. 
Comprehensive experiments have been conducted for multiple representative medical image reconstruction tasks, demonstrating that the proposed MambaMIR and MambaMIR-GAN outperform other baseline and state-of-the-art methods in different reconstruction tasks, where MambaMIR achieves the best reconstruction fidelity and MambaMIR-GAN has the best perceptual quality. 
In addition, our MC-ASM provides uncertainty maps as an additional tool for clinicians, while mitigating the typical performance drop caused by the commonly used dropout.
\end{abstract}

\begin{keyword}
\MSC 41A05\sep 41A10\sep 65D05\sep 65D17
\KWD Medical Image Reconstruction\sep Mamba\sep Uncertainty\sep Wavelet\sep Deep Learning
\end{keyword}

\end{frontmatter}

\section{Introduction}
\label{sec:introduction}

Medical imaging reconstruction stands as one of the most fundamental and pivotal components of medical imaging. High-quality and high-fidelity reconstructed medical images ensure the precision and effectiveness of subsequent disease diagnosis and treatment planning, thus reducing potential risks to patient health~\citep{Wang2020Deep}. 
\begin{figure*}[t]
    \centering
    \includegraphics[width=0.9\linewidth]{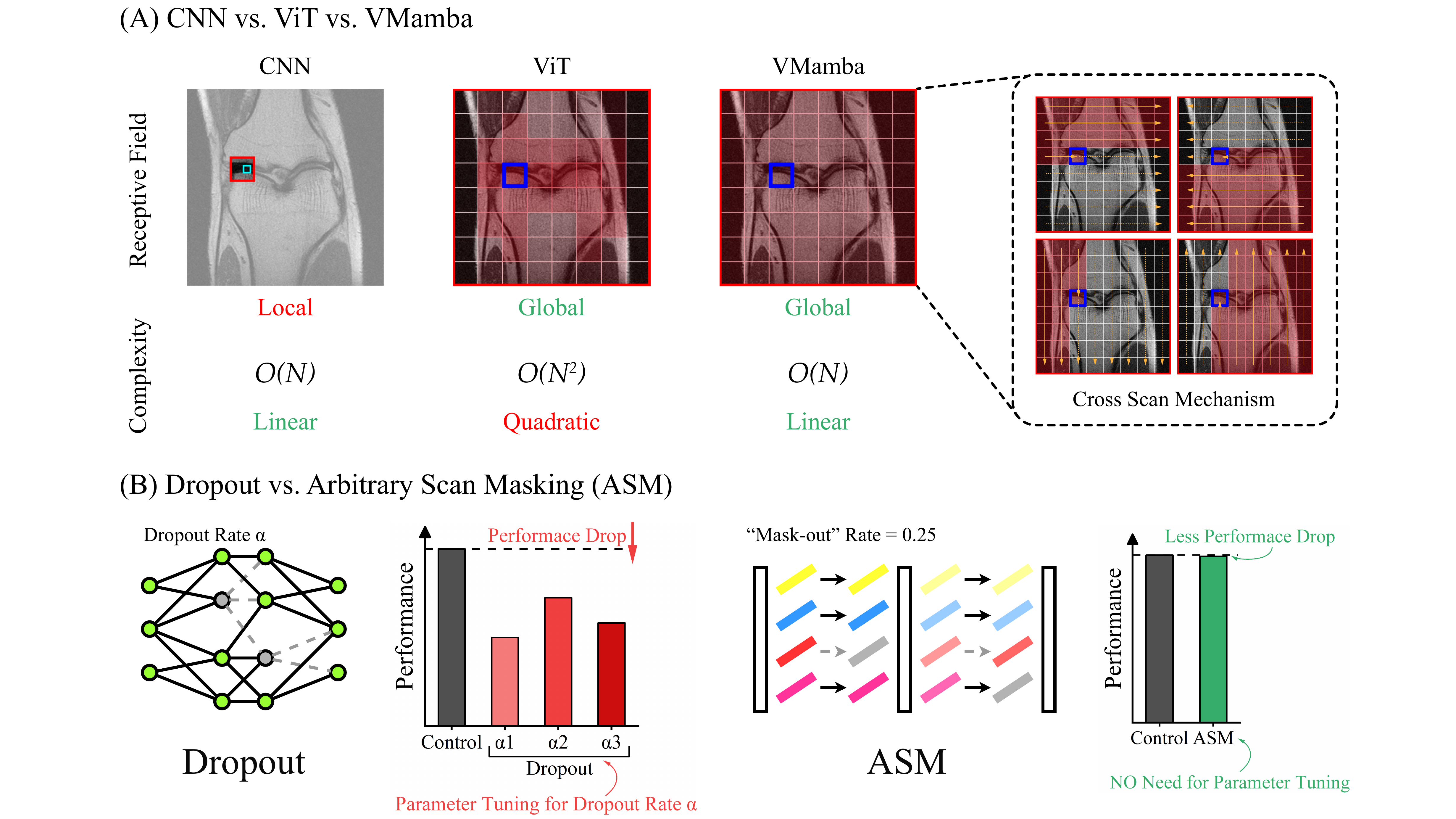}
    \caption{
    (A) Comparison between Convolutional Neural Networks (CNNs), Vision Transformers (ViTs) and VMamba. 
    CNNs and ViTs represent two predominant paradigms, each possessing distinct advantages and inherent limitations: CNNs exhibit linear complexity with local sensitivity, whereas ViTs demonstrate quadratic complexity with global sensitivity. The emerging VMamba~\citep{Liu2024VMamba} has shown superiority in computer vision tasks, combining the advantages of linear scalability and global sensitivity.
    (B) Comparison between dropout and the proposed Arbitrary Scan Masking (ASM) mechanism. 
    Dropout requires careful hyperparameter tuning (dropout rate) and typically leads to a performance drop in low-level tasks, despite its ability to mitigate overfitting in high-level tasks. 
    The proposed ASM mechanism presents a superior alternative to dropout. Instead of randomly ``dropping'' some activations that may be essential for the final outcome, our ASM strategically ``masks out'' a part of redundant information during training and inference stage.
    }
    \label{fig:FIG_INTRO}
\end{figure*}
Magnetic Resonance Imaging (MRI) provides high-resolution and reproducible assessments without exposure to radiation.
Fast MRI is widely utilised to produce MR images from sub-Nyquist sampled \textit{k}-space measurements, aiming to speed up the inherently slow data acquisition process and eliminate artefacts~\citep{Liang2020Deep, Hammernik2022Physics, Huang2024Data}. 
X-ray Computed Tomography (CT), while capable of producing high-quality and detailed images, involves radiation risks. Sparse-view CT (SVCT) has been developed to reduce radiation doses by using fewer projection views, albeit at the risk of introducing significant artefacts~\citep{shah2008alara, pan2009commercial}. 
Positron Emission Tomography (PET), critical for understanding metabolic and functional body processes, often requires long scan times or high doses for quality imaging, leading to discomfort and risk. To address this challenge, the development of low-dose PET (LDPET) presents a promising avenue to enhance image quality without increasing the injected doses~\citep{knopp2020advances}.

\begin{figure*}[t]
    \centering
    \includegraphics[width=0.8\linewidth]{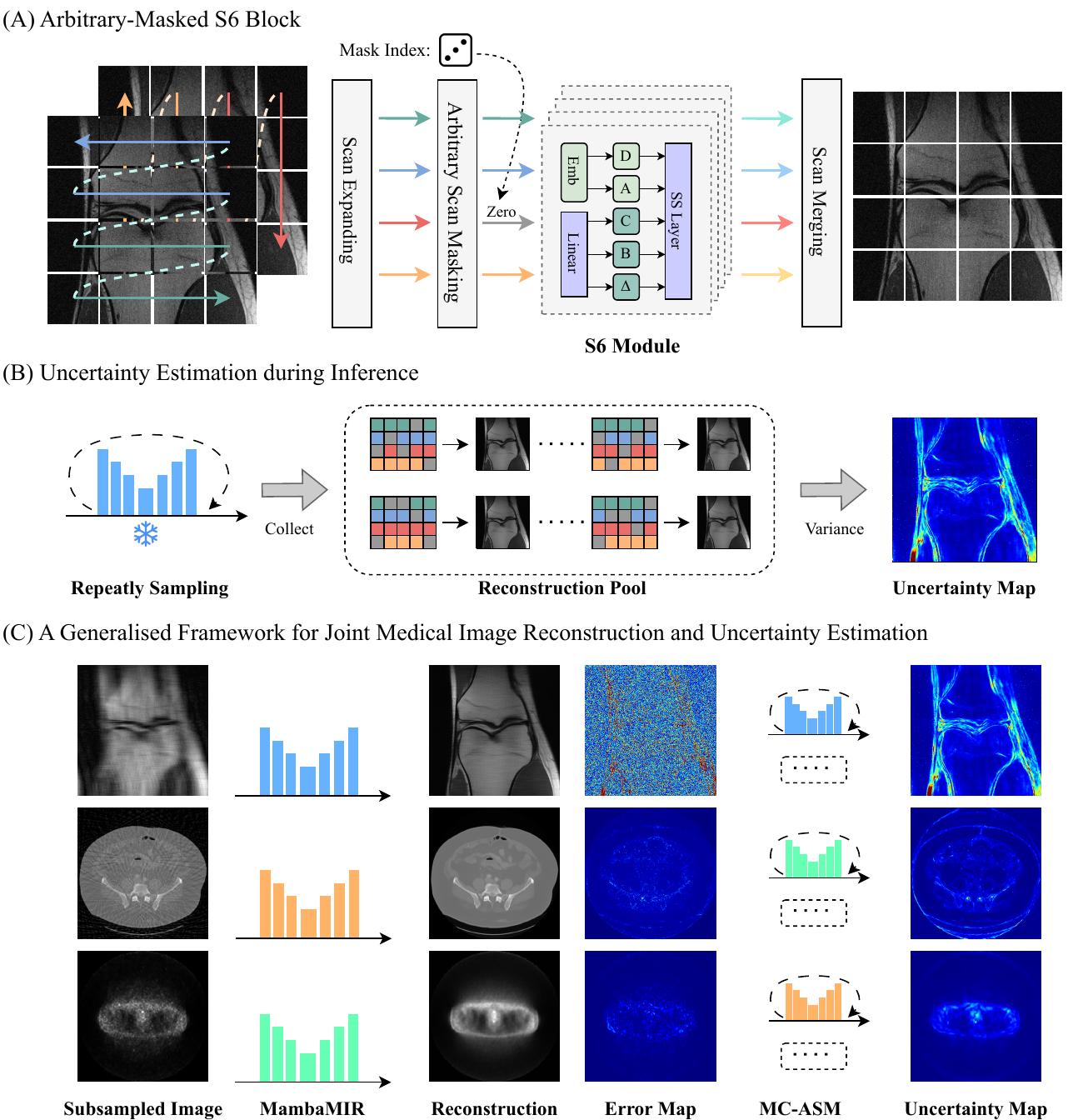}
    \caption{
    (A) The proposed Arbitrary-Masked S6 (AMS6) block. An ASM6 block includes a Scan Expanding module, an Arbitrary Scan Masking module, an S6 module, and a Scan Merging module.
    (B) Uncertainty estimation with the proposed Arbitrary Scan Masking mechanism during inference.
    (C) The framework of the proposed MambaMIR.
    }
    \label{fig:FIG_AMS6}
\end{figure*}

A key research topic and challenge for medical image reconstruction is developing an effective, efficient, and reliable reconstruction model.
Rapid advancement of artificial intelligence has propelled the development and widespread application of deep learning-based medical image reconstruction. Convolutional Neural Networks (CNNs) and Vision Transformers (ViTs)~\citep{Dosovitskiy2020ViT} represent the predominant paradigms that have achieved remarkable success in various vision tasks and are widely used in the medical imaging field. However, both CNNs and ViTs possessed their distinct advantages and inherent limitations. 

Convolutional Neural Networks excel at capturing visual features especially in identifying local patterns, taking advantage of their hierarchical architecture and inductive biases. The shared weight mechanism makes them more parameter efficient than multilayer perceptrons (MLPs). However, despite their powerful feature extraction capabilities and linear complexity, as Fig.~\ref{fig:FIG_INTRO} (A) illustrated, CNNs typically exhibit local sensitivity and a lack of long-range dependencies, limiting their ability to contextualise global features.
Vision Transformers~\citep{Dosovitskiy2020ViT}, characterised by their large receptive fields and global sensitivity, often outperform CNNs in capturing extensive contextual information. Nonetheless, as Fig.~\ref{fig:FIG_INTRO} (A) shown, their significant computational demand, due to the quadratic complexity of the self-attention mechanism~\citep{Liu2024VMamba}, limits their practicality for medical image reconstruction.
Recent Transformer-based models for medical image reconstruction have sought to mitigate these limitations by:
1) adopting a trade-off strategy that applies the self-attention mechanism within shifting windows rather than across the entire feature map~\citep{Liang2021SwinIR, Huang2022SwinMR};
2) constructing hybrid models that incorporate CNNs~\citep{Chen2021TransUNet} or Swin Transformers~\citep{Liu2021Swin}, applying ViT blocks only within deep, low-resolution latent spaces~\citep{Chen2021TransUNet, Huang2022SDAUT}.

As a powerful alternative, the emerging Mamba~\citep{Gu2023Mamba} originated from natural image processing, combines the advantages from both CNNs and ViTs. 
Mamba exhibits superior efficiency in managing long-sequence modelling due to its linear complexity and enhancement through hardware-aware optimisations. This efficiency positions Mamba as a viable contender to the prevalent self-attention mechanisms found in Transformers, particularly for tasks involving the processing of high-resolution visual data, as Fig.~\ref{fig:FIG_INTRO} (A) illustrated.

In this study, we aim to explore the potential of Mamba in the field of medical image reconstruction, and propose MambaMIR, a Mamba-based model for joint medical image reconstruction and uncertainty estimation.
According to Fig.~\ref{fig:FIG_INTRO} (A), MambaMIR has global sensitivity with linear computational complexity, especially beneficial for low-level tasks such as medical image reconstruction, which often necessitates handling long sequences (large spatial resolutions) and maintaining global sensitivity. 

In medical image reconstruction, uncertainty estimation is presented as an essential confidence assessment, providing additional information to the clinician by highlighting critical areas of concern.
Monte Carlo (MC) dropout is a commonly used uncertainty estimation method, relying on the randomness of dropout during the training and inference stages~\citep{Gal2016Dropout}. 
However, as Fig.~\ref{fig:FIG_INTRO} (B) illustrated, dropout requires careful hyperparameter tuning for the dropout rate, which is usually sensitive to the reconstruction performance.
Furthermore, dropout typically leads to a performance drop in low-level tasks like image reconstruction, despite its ability to mitigate overfitting in high-level tasks~\citep{Kong2022Reflash}.

In this study, we propose a novel Arbitrary Scan Masking (ASM) mechanism, presenting a superior alternative to dropout. Instead of randomly ``dropping'' some activations that may be essential for the final outcome, our ASM strategically ``masks out'' a part of redundant information during training and inference stage, as Fig.~\ref{fig:FIG_AMS6} (A) illustrated.
According to Fig.~\ref{fig:FIG_AMS6} (B), during the inference stage, a distribution of the reconstruction results is collected, and uncertainty maps can be produced by the variance of the resulting distribution. 

Our proposed MambaMIR is a generalised framework for joint medical image reconstruction across different image modalities and uncertainty estimation (Fig.~\ref{fig:FIG_AMS6} (C)). To the best of our knowledge, MambaMIR is the first Mamba-based model applied to medical image reconstruction. 
In addition, for the texture information preservation, wavelet decomposition mechanism is employed in the proposed MambaMIR within both the image space and the latent space.
For better perceptual quality, we further explore its variant based on the Generative Adversarial Network (GAN), namely MambaMIR-GAN. 

Experiments have shown that the proposed MambaMIR and MambaMIR-GAN outperformed other baseline and state-of-the-art (SOTA) methods for three medical image reconstruction tasks, including fast MRI, SVCT and LDPET. 
In addition, the use of the Monte Carlo-based Arbitrary Scan Masking mechanism (MC-ASM) provided uncertainty maps as an additional tool for clinicians, while mitigating the typical performance drop caused by conventional dropout.

Our main contributions are summarised as follows:
\begin{itemize}
    \item[$\bullet$] 
    We propose an innovative Mamba-based model, namely MambaMIR, for joint medical image reconstruction and uncertainty estimation (Fig.~\ref{fig:FIG_AMS6} (C)). 
    To the best of our knowledge, MambaMIR is the first Mamba-based model applied to medical image reconstruction. 
    MambaMIR inherits the advantage of global sensitivity and linear computational complexity from Mamba, especially beneficial for medical image reconstruction. 
    \item[$\bullet$]
    We design a novel ASM mechanism which introduces randomness by ``masking out'' redundant information for uncertainty estimation. 
    Compared to the commonly used MC dropout, our proposed MC-ASM provides an uncertainty map without the need for hyperparameter tuning and mitigates the performance drop typically observed when applying dropout to low-level tasks. 
    \item[$\bullet$]
    We employ the wavelet transformation in our proposed MambaMIR and explore its GAN-based variant, for further texture preservation and better perceptual quality.
    \item[$\bullet$]
    Experiments have shown that MambaMIR achieved SOTA for three medical image reconstruction tasks, including fast MRI, SVCT and LDPET.
    In addition, our proposed MC-ASM provided uncertainty maps as an additional tool for clinicians while mitigating the typical performance drop caused by the conventional dropout.
\end{itemize}

\section{Related Work}
\label{sec:related_work}

\subsection{Deep Learning-based Medical Image Reconstruction}
Deep Learning-based medical image reconstruction has witnessed significant advancements in recent years, with a diverse range of models and methodologies developed, which can be broadly categorised into three main paradigms: enhancement-based methods, unrolling-based methods and generative model-based methods~\citep{Hammernik2022Physics}.

Enhancement-based methods, such as CNNs and Transformers, represent a data-driven approach in medical imaging by mapping subsampled data to the fully-sampled equivalents in an end-to-end style. 
This methodology sidesteps the traditional requirement for explicit modelling of acquisition physics, enabling a more streamlined and efficient reconstruction process. 
The effectiveness of this strategy is underscored by its application across a diverse spectrum of imaging modalities, including MRI~\citep{Hyun2018Deep, Huang2022SwinMR}, SVCT~\citep{xia2021magic, yang2022low, yang2022learning}, and PET~\citep{gong2018pet}, where it has consistently demonstrated its ability to accurately reconstruct images by learning a non-linear mapping. 

Unrolling-based methods represent another innovative approach by integrating trainable parameters and neural networks into unrolled iterative reconstruction algorithms, such as those based on the Alternating Direction Method of Multipliers (ADMM). 
In doing so, these models can enforce data consistency more effectively by incorporating elements of the physical model directly into the reconstruction process. The ADMM-Net, for example, exemplifies the potential of unrolling-based models by learning the transformation of images and nonlinear operators, demonstrating significant improvements in the reconstruction of subsampled MRI~\citep{sun2016deep, Schlemper2017D5C5, yang2018admm}, sparse-view CT~\citep{xiang2021fista}, and low-count PET images~\citep{gong2019mapem}. 

Unlike enhancement- or unrolling-based methods, generative models focus on generating fully-sampled images from a learnt distribution, potentially circumventing the need for paired data during training. This category includes variational autoencoders, GANs, and diffusion models, each contributing uniquely to the field. 
These models leverage the power of generative algorithms to simulate realistic high-quality medical images, offering a promising avenue for image reconstruction. The flexibility and generative capacity of these models have attracted attention, recent research illustrating their potential medical image reconstruction, by embracing the inherent variability and complexity of medical data~\citep{zhao2023generative}.

\subsection{Wavelet Transformation in Image Restoration}
Discreate wavelet transform (DWT) decomposes the signal into a set of wavelets with characterisation of both frequency and location. This method is particularly impactful in the field of image reconstruction and enhancement, where it serves multiple functions. 
As a regulariser in advanced computational models, DWT guides optimisation toward convex objectives, which has been shown to be essential in unrolling networks and Plug-and-Play methods~\citep{gu2022revisiting}.
Its ability to perform high-low-frequency decomposition enhances image processing tasks such as super-resolution by effectively suppressing noise and improving convergence through multi-scale representations~\citep{yu2021wavefill}.
Moreover, integration of DWT into neural network architectures as part of downsampling and upsampling operations\citep{wu2023wavelet} enriches the network's ability to process images, reducing noise and artefacts for cleaner input. This incorporation is particularly advantageous for image inpainting and super-resolution~\citep{yu2021wavefill,wang2023disgan}.
In optimisation and texture enhancement, wavelets contribute to the design of loss functions~\citep{li2022wavelet, yang2020net}, where low-frequency loss affects holistic quality, and high-frequency loss improves perceptual quality. This dual approach ensures that images not only exhibit high fidelity, but are also visually appealing and detailed.

\subsection{Uncertainty Estimation in Medical Image Reconstruction}
Uncertainty estimation is crucial in evaluating and understanding the predictions made by deep learning models, particularly in fields like medical imaging where precise and reliable predictions are vital~\citep{zou2023review}. Bayesian Neural Networks (BNNs)~\citep{kendall2017uncertainties} present a framework for quantifying this uncertainty by placing a prior distribution over the model's weights. This approach allows BNNs to capture the inherent uncertainty in predictions, especially useful in ill-posed inverse problems where the objective is to reconstruct a fully-sampled image from limited measurements. However, the complexity of BNNs arises in their inference, as the marginal probability of the network weights cannot be directly computed.

Monte Carlo Dropout~\citep{Gal2016Dropout} offers a practical solution to this challenge by approximating variational inference. By retaining dropout during both training and inference phases, MC Dropout enables the model to sample from its posterior distribution, thus estimating uncertainty by aggregating the outcomes of multiple forward passes. This method effectively integrates with variational autoencoders~\citep{grover2019uncertainty} and diffusion models~\citep{luo2023bayesian}, enhancing their ability to quantify uncertainty. 

Ensemble methods~\citep{lambert2024trustworthy} further extend the uncertainty estimation by leveraging the diversity across multiple models or configurations to infer the uncertainty. This technique captures a wider range of behaviours and biases within the models, offering a more comprehensive view of uncertainty. 
Although more resource-intensive, ensemble approaches enhance the robustness and reliability of uncertainty estimates, making them invaluable in applications requiring high confidence in model predictions.

\subsection{State Space Model and Mamba}

State Space Models (SSMs) have emerged as a foundational framework for the analysis of sequence data, inspired by systems theory which describes a system's dynamics through its state transitions~\citep{Gu2021Efficiently}. 
SSMs are typically characterised as linear, time-invariant systems that map an input sequence $x(t) \in \mathbb{R}^L$ to an output sequence $y(t) \in \mathbb{R}^L$ through a series of hidden states $h(t) \in \mathbb{R}^N$. These models can be expressed using linear ordinary differential equations:
\begin{equation}\label{eq:SSM_C}
\begin{aligned}
h^{\prime}(t) &= \mathbf{A} h(t) + \mathbf{B} x(t), \\
y(t) &= \mathbf{C} h(t) + \mathbf{D} x(t),
\end{aligned}
\end{equation}
where $\mathbf{A} \in \mathbb{R}^{N \times N}$, $\mathbf{B} \in \mathbb{R}^{N \times 1}$, and $\mathbf{C} \in \mathbb{R}^{1 \times N}$ represent the learnable parameters, with $\mathbf{D} \in \mathbb{R}^{1}$ typically denoting a residual connection.

The structured state space sequence models (S4)~\citep{Gu2021Efficiently} and more recent Mamba~\citep{Gu2023Mamba} are based on a discretised version of these continuous models:
\begin{equation}\label{eq:SSM_D}
\begin{aligned}
h_k =\bar{\mathbf{A}} h_{k-1}+\bar{\mathbf{B}} x_k \\
y_k =\bar{\mathbf{C}} h_k+\bar{\mathbf{D}} x_k,
\end{aligned}
\end{equation} 
where $\bar{\mathbf{A}}$, $\bar{\mathbf{B}}$, $\bar{\mathbf{C}}$, $\bar{\mathbf{D}}$ are the discretised parameter, transformed by a timescale parameter $\mathbf{\Delta}$:
\begin{equation}\label{eq:SSM_D_Param}
\begin{aligned}
\bar{\mathbf{A}} = e^{\mathbf{\Delta} \mathbf{A}}, \quad
\bar{\mathbf{C}} = \mathbf{C}, \quad
\bar{\mathbf{D}} = \mathbf{D}, \\
\bar{\mathbf{B}} = (e^{\mathbf{\Delta} \mathbf{A}}-I) \mathbf{A}^{-1} \mathbf{B} \approx \mathbf{\Delta} \mathbf{B}.
\end{aligned}
\end{equation} 

Mamba introduces a novel approach in the landscape of State Space Models (SSMs) with its Selective Structured State Space Sequence Models incorporating a Scan (S6)~\citep{Gu2023Mamba}. This innovation allows for the dynamic parameterisation of the SSM, with parameters $\bar{\mathbf{B}}$, $\bar{\mathbf{C}}$, and $\mathbf{\Delta}$ being derived directly from the input data $x$, enabling an input-specific adaptation of the model. 

Mamba is considered a strong competitor to the Transformer due to its global sensitivity and linear computational complexity, and has been widely applied for various computer vision tasks.
\citet{Zhu2024VisionMamba} introduced a Mamba-based and plain (ViT-style) vision backbone, i.e., Vim, innovatively adapting Mamba for non-causal visual sequence via bi-direction scans mechanism and position embedding strategy. 
\citet{Liu2024VMamba} proposed VMamba, a hierarchical (Swin Transformer-style) vision Mamba backbone. VMamba handles non-causal visual images by the novel cross-scan mechanism, converting images into four ordered patch sequences through integrating pixels from top-left, bottom-right, top-right, and bottom-left.
\citet{Huang2024LocalMamba} developed LocalMamba with a novel window-based local scanning mechanism, effectively capturing local information while maintaining a global sensitivity. In addition, differentiable architecture search~\citep{Liu2018DARTS} was utilised for learning a optimal combination of scan modes.
Research on Mamba has also been extended to the higher-dimensional vision backbone~\citep{Li2024VideoMamba,Li2024MambaND}, where more complicated and task-specific multi-dimensional scanning mechanism were developed.
Mamba-based models have been widely utilised for various down-stream tasks across image segmentation~\citep{Ma2024UMamba,Ruan2024VMUNet,Wang2024Large}, detection~\citep{Gong2024nnMamba,Chen2024MiM} and restoration~\citep{Guo2024MambaIR,He2024PanMamba,Zheng2024FD}.

\section{Methodology}
\label{sec:methodlogy}

\subsection{Medical Image Reconstruction}
The forward acquisition process for medical images is described by:
\begin{equation}\label{eq:forward}
\begin{aligned}
\mathbf{y}=\mathbf{A} \mathbf{x} + \mathbf{n},
\end{aligned}
\end{equation}
where $\mathbf{x} \in \mathbb{C}^n$ represents the image of interest, $\mathbf{y} \in \mathbb{C}^m$ denotes the corresponding measurements, and $\mathbf{n} \in \mathbb{C}^m$ is the inevitable noise encountered during the measurement process. 

Depending on the type of medical imaging, the forward operator $\mathbf{A}$ can vary. 
For fast MRI, $\mathbf{A}$ can be a subsampled discrete Fourier transform $\mathcal{F}_{\Omega}: \mathbb{C}^n \rightarrow \mathbb{C}^m$, sampling the \textit{k}-space locations as specified by $\Omega$. 
For SVCT, $\mathbf{A}$ is represented by the Radon transform $\mathcal{R}_{\Gamma}: \mathbb{C}^n \rightarrow \mathbb{C}^m$, projecting targets into a sinogram under a selected set of imaging angles $\Gamma$.
For LDPET, $\mathbf{A}$ is represented by the detection probability matrix $\mathcal{R}_{\Delta}: \mathbb{C}^n \rightarrow \mathbb{C}^m$, detecting coincidence events of gamma photons in acquisition time $\Delta$.

Generally, the goal of the reconstruction stage is to recover the ground truth $\mathbf{x}$ from the undersampled measurements $\mathbf{y}$. This process can be formulated as an inverse problem:
\begin{equation}\label{eq:inverse}
\begin{aligned}
\hat{\mathbf{x}}=\arg \min _{\mathbf{x}} \frac{1}{2}\|\mathbf{A} \mathbf{x} - \mathbf{y}\|_2^2+\lambda \mathcal{R}(\mathbf{x}),
\end{aligned}
\end{equation} 
where $\mathcal{R}$ represents a class of regularisers, and $\lambda$ is a balancing parameter. This formulation aims to minimise the discrepancy between the measured and predicted data while incorporating regularisation to impose prior knowledge or desired properties on the solution.

\subsection{MambaMIR: Overall Architecture}

The proposed MambaMIR model adopts a U-shaped architecture, which incorporates modules for patch embedding and unembedding, along with $M$ paired encoder and decoder residual blocks, each linked by corresponding skip connections.
Within both the encoding and decoding pathways, each residual block consists of $N$ Arbitrary-Masked State Space (AMSS) blocks and includes modules for wavelet-based downsampling and upsampling. 
In the bottleneck, two Wavelet-embedded AMSS (WAMSS) blocks and a single Attention block for deep feature extraction are employed.
Furthermore, features derived from the wavelet decomposition of the initial input are integrated into the encoding pathway through the Wavelet Decomposition module.

\subsection{Wavelet Decomposition Mechanism}
To further improve reconstruction quality and texture preservation, wavelet decomposition mechanisms are incorporated into the proposed MambaMIR in various modules, inspired by~\citet{Phung2023Wavelet}. 
The 2D DWT and the inverse DWT (iDWT) are two commonly used transformations in medical images.
A 2D image $X$ can be decomposed into four subbands carrying different frequency components via DWT, while iDWT can recover the 2D image $X^\prime$ from four subbands:
\begin{equation}\label{eq:dwt}
\begin{aligned}
\{LL, HL, LH, HH\} = \operatorname{DWT}(X), \\
X^\prime = \operatorname{iDWT}(\{LL, HL, LH, HH\}),
\end{aligned}
\end{equation} 
where $LL$ indicates the low-frequency subband, $HL$, $LH$ and $HH$ are three high-frequency subbands representing vertical, horizontal and diagonal features of the original image. For conciseness, we use $H+$ to indicate these three high-frequency subbands.

\subsubsection{Wavelet-based Downsampling and Upsampling Module}
Conventionally, pooling or stride convolution are two popular methods for feature maps spatial downsampling, while deconvolution and pixel shuffling are commonly utilised for spatial upsampling. 
In terms of DWT and iDWT, one essential characteristic is their inherent spatial downsampling and upsampling mechanism, where each subband has only half of height and width compared to the original image. 
Our proposed Wavelet-based Downsampling (WDown) and Upsampling (WUp) modules leverage DWT and iDWT, inherently enabling feature downsampling and upsampling, while disentangling different frequency components.
In addition, the high-frequency skip connection is used between two paired WDown and WUp modules to transfer the high-frequency subbands ($H^+$: $HL$, $LH$, $HH$). 
The WDown module can be mathematically written as:
\begin{equation}\label{eq:wdown}
\begin{aligned}
X^{\prime} &= \operatorname{Conv}(\operatorname{GN}(X_{\text{in}})), \\
\{X_{\text{skip}}, ...\} &= \{\underbrace{LL}_{X_{\text{skip}}}, ...\} = \operatorname{DWT}(\operatorname{Conv}(X_{\text{in}})) \\
\{X^{\prime\prime}, H^{+}\} &= \{\underbrace{LL^{\prime}}_{X^{\prime\prime}}, \underbrace{HL^{\prime}, LH^{\prime}, HH^{\prime}}_{H^{+}}\} 
= \operatorname{DWT}(X^{\prime}), \\
X^{\prime\prime\prime} &= \operatorname{Conv}(\operatorname{GN}(X^{\prime\prime})), \\
X_{\text{out}} &= X^{\prime\prime\prime} + X_{\text{skip}},
\end{aligned}
\end{equation} 
where $X_{\text{in}}$ and $X_{\text{out}}$ are the input and output feature maps of the WDown module. $H^{+}$ is the high-frequency component that is transferred to the corresponding WUp module. $\operatorname{Conv}$ and $\operatorname{GN}$ indicate the convolutional layer and group normalisation. 
The WUp module can be written as:
\begin{equation}\label{eq:wup}
\begin{aligned}
X^{\prime} &= \operatorname{Conv}(\operatorname{GN}(X_{\text{in}})), \\
X_{\text{skip}} &= \operatorname{iDWT}(\{\operatorname{Conv}(X_{\text{in}}), H^{+} \}) \\
X^{\prime\prime} &= \operatorname{iDWT}(\{X^{\prime}, H^{+} \}) \\
X^{\prime\prime\prime} &= \operatorname{Conv}(\operatorname{GN}(X^{\prime\prime})) \\
X_{\text{out}} &= X^{\prime\prime\prime} + X_{\text{skip}},
\end{aligned}
\end{equation} 
where $X_{\text{in}}$ and $X_{\text{out}}$ are the input and output feature maps of the WUp module. 

\subsubsection{Wavelet Decomposition and Wavelet Information Fusion}

To better preserve texture information, additional skip connections are used to integrate wavelet-derived information from the original image into the encoder’s feature maps. In the wavelet information pathway, Wavelet Decomposition modules are utilised for wavelet information extraction and spatial downsampling to adapt the resolution of encoder feature maps, which can be expressed as:
\begin{equation}\label{eq:wavelet}
\begin{aligned}
X^{\prime} &= \{LL, HL, LH, HH \} = \operatorname{DWT}(X_{\text{in}}), \\
X_{\text{out}} &= \operatorname{Conv}(X^{\prime}), \\
\end{aligned}
\end{equation} 
where $X_{\text{in}}$ and $X_{\text{out}}$ are the input and output of the Wavelet Decomposition module. 
The wavelet-derived information is further integrated into the encoding pathway, functioning similarly to skip connections, but specifically for wavelet features.

\subsubsection{Wavelet AMSS}
In addition, two additional WAMSS blocks are applied to enhance the feature extraction from low-frequency components by AMSS blocks, while attempt to preserve more high-frequency detail via skip connections, which can be expressed as:
\begin{equation}\label{eq:wamss}
\begin{aligned}
\{X^{\prime}, H^{+}\} &= \{\underbrace{LL}_{X^{\prime}}, \underbrace{HL, LH, HH}_{H^{+}}\} = \operatorname{DWT}(X_{\text{in}}), \\
X^{\prime\prime} &= \operatorname{AMSS}(X^{\prime}), \\
X_{\text{out}} &= \operatorname{iDWT}(\{X^{\prime\prime}, H^{+}\}), \\
\end{aligned}
\end{equation} 
where $X_{\text{in}}$ and $X_{\text{out}}$ are the input and output of the WAMSS blocks. 

\subsection{AMSS Block}

The structure of the AMSS block follows the design of the Mamba block~\citep{Gu2023Mamba} and VSS block~\citep{Liu2024VMamba}.
The input of AMSS blocks goes through a layer normalisation step before being divided into two pathways. The first pathway follows a sequence of layers: a gating linear layer, a depth-wise convolution layer with a 3$\times$3 kernel, a SiLU activation function~\citep{Ramachandran2017Searching}, an Arbitrary-Masked S6 (AMS6) block and another layer normalisation layer. Meanwhile, the second pathway involves a linear layer with a SiLU activation function. The results of these two pathways are combined by multiplication and then passed through a final gating linear layer to generate the output of the AMSS block. 
The AMSS Block can be mathematically written as:
\begin{equation}\label{eq:amss}
\begin{aligned}
X^{\prime} &= \operatorname{LN}(X_{\text{in}}), \\
X^{\prime\prime} &= \operatorname{LN}(\operatorname{AMS6}(\operatorname{DWConv}(\operatorname{Linear}(X^{\prime})))), \\
X_{\text{gate}} &= \operatorname{Linear}(X_{\text{in}}), \\
X_{\text{out}} &= \operatorname{Linear}(X_{\text{gate}} \odot X^{\prime\prime}) + X_{\text{in}}, \\
\end{aligned}
\end{equation} 
where $X_{\text{in}}$ and $X_{\text{out}}$ are the input and output of the AMSS blocks. $\operatorname{DWConv}$ is the depth-wise convolution layer, $\operatorname{LN}$ is the layer normalisation layer, and $\odot$ is the Hadamard production.

\subsection{Monte Carlo-based Arbitrary Scan Masking}

\subsubsection{Arbitrary-Masked S6 Block}

A challenge arises when using Mamba to process vision data. S6 inherently processes data in an ordered sequential style, where information integration is limited to data that has been sequentially processed. This characteristic aligns well with temporal natural language processing tasks, however, posing challenges for computer vision tasks where data is not strictly sequential.
Existing methods have been developed to mitigate this challenge by re-ordering the visual sequence by various directions, meanwhile leading to redundancy in sequence information~\citep{Zhu2024VisionMamba, Liu2024VMamba}.

In this study, we introduce the AMS6 block, a novel component aimed at improving the performance of State Space Models in processing visual data, as Fig.~\ref{fig:FIG_AMS6} illustrated.
Our proposed AMS6 block incorporates the cross-scan mechanism~\citep{Liu2024VMamba}, to adapt Mamba to medical image data, while leveraging the inherent redundancy for uncertainty estimation.
The AMS6 block includes four key modules: the Scan Expanding module, the ASM module, the S6 module, and the Scan Merging module. 
The pseudo-code is presented in Algorithm~\ref{alg:ams6}.
\begin{algorithm}[ht]
    \renewcommand{\algorithmicrequire}{\textbf{Input:}}
    \renewcommand{\algorithmicensure}{\textbf{Output:}}
    \caption{Arbitrary-Masked S6 Block} 
    \label{alg:ams6} 
    \setlength{\tabcolsep}{4mm}{
    \begin{algorithmic}
        \REQUIRE $X$ \quad \textcolor{OliveGreen}{\# feature map, X.shape: $(B, C, H, W)$;}
        \STATE \textcolor{OliveGreen}{\# $B$: batch size, $C$: channel, $H$: height, $W$: width;}
        \STATE \quad
        
        \STATE \textcolor{OliveGreen}{\# Scan Expanding Module}
        \STATE $Xs \gets \operatorname{ScanExpand}(X)$ \quad \textcolor{OliveGreen}{\# xs.shape: $(B, 4, C, H, W)$}
        \STATE \quad
        
        \STATE \textcolor{OliveGreen}{\# Arbitrary Scan Masking Mechanism}
        \STATE $s \gets \operatorname{random\_int}({0, 4})$
        \STATE $Xs\_m \gets Xs$ \quad \textcolor{OliveGreen}{\# Xs\_m.shape: $(B, 4, C, H, W)$;}
        \STATE $Xs\_m[:, s, ...] \gets \operatorname{zeros\_like}(Xs[:, s, ...]) $ 
        \STATE \quad

        \STATE \textcolor{OliveGreen}{\# S6 Module}
        \STATE $Ys \gets \operatorname{S6}(Xs\_m)$ \quad \textcolor{OliveGreen}{\# Ys.shape: $(B, 4, C, H, W)$;}
        \STATE \quad
        
        \STATE \textcolor{OliveGreen}{\# Scan Merging Module}
        \STATE $Y \gets \operatorname{ScanMerge}(Ys)$ \quad \textcolor{OliveGreen}{\# Y.shape: $(B, C, H, W)$;}
        \STATE \quad
        
        \ENSURE $Y$ \quad \textcolor{OliveGreen}{\# feature map, Y.shape: $(B, C, H, W)$;}
    \end{algorithmic}}
\end{algorithm}

The Scan Expanding module extends image patches across rows or columns, beginning from the upper-left or lower-right corner, transforming a single image into four distinct ordered sequences, as Fig.~\ref{fig:FIG_AMS6} (A) illustrated.
The expanding process results in a $4\times$ expansion of an image, making it redundant because all scans contain identical information, with the only variation being the direction of the scan.

Randomness is introduced in the ASM module via arbitrary scan masking, which takes advantage of the redundancy of scans. This is achieved by nullifying the pixels in one out of the four scans randomly chosen, selectively masking-out information while keeping the original matrix shape unchanged. In this way, our ASM mitigates the performance drop typically observed when applying dropout to low-level tasks.

The S6 module is the core component of the AMS6 block, responsible for processing scan-expanded sequences. 
Subsequently, these processed scans are merged and re-organised into their original patch form by the Scan Merging module. 

The integration within the AMS6 block enhances the image reconstruction process meanwhile introducing randomness for the further uncertainty estimation via MC-ASM mechanism.

\subsubsection{Monte Carlo-based Arbitrary Scan Masking Mechanism}
Our proposed MC-ASM achieves uncertainty estimation by producing a distribution of predictions (reconstruction) from a single input (subsampled images) during inference stage, utilising randomness from ASM module, inspired by MC dropout~\citep{Gal2016Dropout}. 

From a Bayesian perspective, the arbitrary masking procedure can be interpreted as a variational inference method to approximate the posterior distribution of a model's weights $p(\theta | \mathcal{D})$, which accounts for uncertainty after observing the data $\mathcal{D}$, as expressed in the Bayesian theorem:
\begin{equation}\label{eq:mcdropout1}
\begin{aligned}
p(\theta | \mathcal{D}) \propto p(\mathcal{D} | \theta) p(\theta),
\end{aligned}
\end{equation} 
where $p(\theta)$ encodes our prior knowledge about the parameters before observing the data, and $p(\mathcal{D} | \theta)$ represents the likelihood of the data given the parameters. 
The posterior predictive distribution is integral to Bayesian predictive modelling:
\begin{equation}\label{eq:mcdropout2}
\begin{aligned}
p(\mathbf{Y} | \mathbf{X}, \mathcal{D}) = \int p(\mathbf{Y} | \mathbf{X}, \theta) p(\theta | \mathcal{D}) d\theta,
\end{aligned}
\end{equation} 
where $\mathbf{X}$ is the input and $\mathbf{Y}$ is the resulting output.

In practice, MC-ASM approximates the posterior predictive distribution $p(\mathbf{Y} | \mathbf{X}, \mathcal{D})$ by repeatedly sampling from the model with ASM at the inference stage and obtaining a set of outputs $\{\mathbf{Y}_1, \mathbf{Y}_2, \ldots, \mathbf{Y}_N\}$, corresponding to a diverse set of sub-models $\{\theta_1, \theta_2, \ldots, \theta_N\}$. The variance of these outputs can be used to quantify the model's predictive uncertainty:
\begin{equation}\label{eq:mcdropout3}
\begin{aligned}
\mathbb{E}[\mathbf{Y} | \mathbf{X}, \mathcal{D}] &\approx \frac{1}{N} \sum_{i=1}^{N} \mathbf{Y}_i, \\ 
\operatorname{Var}[\mathbf{Y} | \mathbf{X}, \mathcal{D}] &\approx \frac{1}{N} \sum_{i=1}^{N}
(\mathbf{Y}_i - \mathbb{E}[\mathbf{Y} | \mathbf{X}, \mathcal{D}])^2.
\end{aligned}
\end{equation} 
This formulation allows for an empirical estimate of the epistemic uncertainty associated with the predictions, providing additional information for clinicians.

\subsubsection{Arbitrary Scan Masking is a Special Case of Dropout}
We next provide the theoretical guarantee for the proposed MC-ASM, demonstrating that the ASM mechanism can be regarded as a special case of dropout. 
\begin{proposition}
\label{main_proposition}
The Arbitrary Scan Masking mechanism can be regarded as a special case of dropout.
\end{proposition}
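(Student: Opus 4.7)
The plan is to write ASM in the canonical form of a Hadamard-product mask and then observe that this mask is simply a particular instance of a discrete dropout distribution. First, I would formalise both mechanisms at the level of tensor algebra. Standard dropout operates on an input tensor $X$ by computing $X' = M \odot X$, where $M$ is a binary mask sampled from some prescribed distribution (typically i.i.d.\ Bernoulli). The ASM module, after scan expansion, receives a tensor $Xs \in \mathbb{R}^{B \times 4 \times C \times H \times W}$ and outputs $Xs'$ in which exactly one of the four scan slices $Xs[:, s, :, :, :]$ is set to zero, with $s \sim \operatorname{Uniform}\{0, 1, 2, 3\}$. Writing $m = (m_0, m_1, m_2, m_3) \in \{0, 1\}^4$ with $m_s = 0$ and $m_{s'} = 1$ for $s' \neq s$, and broadcasting $m$ across the $(B, C, H, W)$ axes into a binary tensor $M \in \{0, 1\}^{B \times 4 \times C \times H \times W}$, the ASM operation becomes $Xs' = M \odot Xs$, matching the canonical dropout form exactly.

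Next, I would identify the specific dropout distribution that ASM induces. The mask $M$ is drawn from a structured (i.e.\ non-independent) multivariate Bernoulli satisfying: (i) entries within any fixed scan slice are perfectly correlated, and (ii) exactly one of the four slices is zeroed, giving an effective per-slice drop rate of $p = 1/4$. This places ASM inside the broader family of structured or block-wise dropout variants, of which DropBlock and DropConnect are well-known instances, each corresponding to its own choice of joint distribution on $M$. Hence ASM is a dropout mechanism with a specific, constrained distribution over masks, and the reduction to ``a special case of dropout'' then follows once one adopts an inclusive definition of dropout as Hadamard masking by a binary random tensor.

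Finally, I would note that this identification is exactly what is required to justify the Bayesian interpretation used in equations~\eqref{eq:mcdropout1}--\eqref{eq:mcdropout3}. The variational-inference argument underpinning MC dropout~\citep{Gal2016Dropout} requires only that the implicit variational family $q_\phi(\theta)$ be a well-defined distribution supported on a discrete set of weight configurations; it does not require the entries of the mask to be independent. Reinterpreting ASM as sampling $\theta$ from a uniform mixture of four degenerate distributions (one per dropped scan) yields a legitimate $q_\phi(\theta)$, and the KL term to the prior remains bounded by the same argument as in the Bernoulli case. The main obstacle is reconciling the correlated, one-hot structure of the ASM mask with the tacit assumption in the classical analysis that dropout entries are i.i.d.; I would address this by first fixing the inclusive definition above, then verifying step by step that each stage of the variational-inference derivation goes through under this broader definition, at which point the proposition collapses to the straightforward tensor-algebraic observation made in the first step.
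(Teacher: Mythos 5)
Your proposal is correct and follows essentially the same route as the paper: both write ASM as a Hadamard product $X \odot M$ with a binary mask and observe that it differs from standard dropout only in that the mask's minimal selection unit is a whole scan (drawn uniformly from four) rather than an independent Bernoulli per element. Your additional third paragraph on carrying the variational-inference argument through for correlated masks goes beyond what the paper's proof attempts, but the core identification is the same.
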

\begin{proof}
Dropout is a common regularisation technique widely used in deep learning, which randomly and temporarily removes a fraction of neurons and their connections in certain layers according to a predetermined probability.
In the actual implementation, dropout is typically applied to the output of a layer within a neural network:
\begin{equation}\label{eq:p_dropout1}
\begin{aligned}
X_{\operatorname{Dropout}}^{(l)} &= X^{(l)} \odot D^{(l)},
\end{aligned}
\end{equation} 
where $X^{(l)}$ is the output of $l^\text{th}$ layer in the neural network with a shape of $(B, C, H, W)$ and $X_{\operatorname{Dropout}}^{(l)}$ is the result after dropout operation. 
$D^{(l)}$ is a mask for dropout with the same shape as $X_{\operatorname{Dropout}}^{(l)}$. $\odot$ is the Hadamard production.
For simplicity, typically each component of the dropout mask $D_{b,c,h,w}^{(l)}$ independently follows a Bernoulli distribution:
\begin{equation}\label{eq:p_dropout2}
\begin{aligned}
D_{b,c,h,w}^{(l)} &\sim \operatorname{Bernoulli} (1 - p),
\end{aligned}
\end{equation} 
where $p$ is a predetermined dropout rate.

For the proposed AMS6 block, the ASM mechanism can be written as:
\begin{equation}\label{eq:p_ams6_1}
\begin{aligned}
X_{\operatorname{AM}}^{(l)} &= X^{(l)} \odot M^{(l)},
\end{aligned}
\end{equation} 
where $X^{(l)}$ is the expanded feature map after Scan Expanding module at $l^\text{th}$ AMS6 block, with a shape of $(B, 4, C, H, W)$, and $X_{\operatorname{AM}}^{(l)}$ is the result after ASM mechanism. 
$M^{(l)}$ is a mask with the same shape of $X_{\operatorname{AM}}^{(l)}$, which can be written as:
\begin{equation}\label{eq:p_ams6_2}
\begin{aligned}
M_{b,s,c,h,w}^{(l)} &= 
    \begin{cases} 
    0, & s = s', \\
    1, & s \neq s',
    \end{cases} \\
s' &\sim \operatorname{Uniform}\{0,1,2,3\},
\end{aligned}
\end{equation} 
where the index $s'$ of masked scan is randomly selected in a uniform distribution of $\{0,1,2,3\}$, and components belong to the corresponding scan are masked-out with zero value.

We demonstrate that the dropout and our ASM mechanism are inherently sharing the same mathematical form, and our ASM mechanism is a specially case of dropout with different ``dropout selection'' mechanisms. 
The minimum selection unit for typical dropout is a single matrix element, and the probability of each element is independent and follows a Bernoulli distribution controlled by a dropout rate of $p$.
However, the minimum selection unit for the ASM mechanism is a scan, and the probability of each scan is not independent.
\end{proof}

\subsection{Optimisation Scheme}

Our proposed MambaMIR can be trained and tested in an end-to-end style, represented as $\hat{\mathbf{x}}_u = \operatorname{MambaMIR}(\mathbf{x}_u)$, where $\mathbf{x}_u$ and $\hat{\mathbf{x}}_u$ denote the subsampled input and the resulting reconstruction. 

A hybrid loss $\mathcal{L}_{\mathrm{Tot}}(\theta)$ is employed for model training, involving a Charbonnier loss~\citep{Lai2019Fast} in both the image and frequency domains, which are presented as $\mathcal{L}_{\mathrm{img}}(\theta)$ and $\mathcal{L}_{\mathrm{freq}}(\theta)$, respectively. 
For better perceptual reconstruction quality, we pose a $l_{1}$ restriction on the latent space by a pre-trained VGG model $f_{\mathrm{VGG}}(\cdot)$~\citep{Simonyan2014VGG} and have $\mathcal{L}_{\mathrm{perc}}(\theta)$.
These loss functions are defined as:
\begin{equation}\label{eq:loss}
\begin{aligned}
\mathop{\text{min}}\limits_{\theta} \mathcal{L}_{\mathrm{img}}(\theta) 
&=\sqrt{\mid\mid \mathbf{x} - \hat{\mathbf{x}}_u \mid\mid^2_2 + \epsilon^2}, \\
\mathop{\text{min}}\limits_{\theta} \mathcal{L}_{\mathrm{freq}}(\theta) 
&=\sqrt{\mid\mid \mathcal{F}\mathbf{x} - \mathcal{F} \hat{\mathbf{x}}_u \mid\mid^2_2 + \epsilon^2}, \\
\mathop{\text{min}}\limits_{\theta} \mathcal{L}_{\mathrm{perc}}(\theta) 
&= \mid\mid f_{\mathrm{VGG}}(\mathbf{x}) - f_{\mathrm{VGG}}(\hat{\mathbf{x}}_u) \mid\mid_1, \\
\mathcal{L}_{\mathrm{Tot}}(\theta)
& = \alpha \mathcal{L}_{\mathrm{img}}(\theta)
+ \beta \mathcal{L}_{\mathrm{trans}}(\theta)
+ \gamma \mathcal{L}_{\mathrm{perc}}(\theta),
\end{aligned}
\end{equation} 
where $\mathbf{x}$ is the ground truth. $\epsilon$ in the Charbonnier loss is empirically set to $10^{-9}$. The trainable network parameter of the proposed MambaMIR is denoted as $\theta$. $\mathcal{F}$ represents the Discrete Fourier Transformation. $\alpha$, $\beta$ and $\gamma$ are parameters balancing different losses. 

For the GAN-based variant, i.e., MambaMIR-GAN, our MambaMIR is applied as the generator $G_{\theta_{G}}$ parameterised by $\theta_{G}$ and a U-Net discriminator~\citep{Schonfeld2020UNet} parameterised by $\theta_{D}$ for adversarial training.
The adversarial loss $\mathcal{L}_{\mathrm{adv}}(\theta_{G}, \theta_{D})$ and the total loss for MambaMIR-GAN are written as:
\begin{equation}\label{eq:loss_gan}
\begin{aligned}
\mathop{\text{min}}\limits_{\theta_{G}} 
\mathop{\text{max}}\limits_{\theta_{D}}
\mathcal{L}_{\mathrm{adv}}(\theta_{G}, \theta_{D}) 
&=\mathbb{E}_{\mathbf{x} \sim p_{\mathrm{t}}(\mathbf{x})}
[\mathop{\text{log}} D_{\theta_{D}}(\mathbf{x})] \\
&- \mathbb{E}_{\hat{\mathbf{x}}_u \sim p_{\mathrm{u}}(\hat{\mathbf{x}}_u)}
[\mathop{\text{log}} D_{\theta_{D}}(\hat{\mathbf{x}}_u)], \\
\mathcal{L}_{\mathrm{Tot-GAN}}(\theta_{G}, \theta_{D})
&= \mathcal{L}_{\mathrm{Tot}}(\theta_{G})
+ \eta \mathcal{L}_{\mathrm{adv}}(\theta_{G}, \theta_{D}),
\end{aligned}
\end{equation} 
where $\eta$ is the weighting parameter.

\section{Experiments}
\label{sec:experiments}

\subsection{Dataset}

In this work, we used the
FastMRI knee dataset~\citep{Zbontar_2018_fastMRI} and Stanford knee MRI dataset (SKMTEA)~\citep{Desai_2023_SKMTEA} for fast MRI reconstruction, 
two distinct anatomical subsets from Low-Dose CT Image and Projection Datasets~\citep{moen2021low} for SVCT reconstruction,
along with an in-house PET datasets for LDPET reconstruction.

\subsubsection{MRI: FastMRI}
For FastMRI dataset~\citep{Zbontar_2018_fastMRI}, we used 584 three-dimensional (3D) proton density weighted knee MRI scans with available ground truth, which were acquired with 15 coils without fat suppression. Within each case, 20 slices of 2D coronal-view complex-value images near the centre were utilised and centre-cropped to a resolution of $320 \times 320$ in the image space.
We randomly divided all 2D slices following a ratio of 7:1:2, into training set (420 cases), validation set (64 cases) and testing set (100 cases). The officially emulated single-coil data were applied as complex-value ground truth.

\subsubsection{MRI: SKMTEA}
For SKM-TEA dataset~\citep{Desai_2023_SKMTEA}, 155 scans of 3D, quantitative double-echo steady-state knee MRI scans with available ground truth were applied in the experiments section.
To avoid including very noisy or void slices, 100 sagittal-view 2D single-channel complex-value echo \#1 slices were chosen for each case. All slices were centre-cropped to $512 \times 512$ in the image space. 
We split 155 cases into training set (86 cases), validation set (33 cases) and testing set (36 cases), following the official dataset splits.

\subsubsection{CT: Low-Dose CT Image and Projection Datasets}
For Low-Dose CT Image and Projection Datasets~\citep{moen2021low}, two subsets including a) chest and b) abdomen were applied. The chest subset consisted of low-dose non-contrast scans aimed at screening high-risk patients for pulmonary nodules, while the abdomen subset consisted of contrast-enhanced CT scans used to detect metastatic liver lesions. 
Each subset included 40 cases. We spilt scans in each subset into training set (32 cases in chest scans; 32 cases in abdomen scans) and testing set (8 cases in chest scans; 8 cases in abdomen scans). 
Sparse-view sinograms were generated in a fan-beam CT geometry, with 60 projection views and 736 detectors. The source-to-detector distance was set to 1000 mm, and the source-to-rotation-centre distance was 512 mm. The reconstructed image resolution was $512 \times 512$ pixels.

\subsubsection{PET: Low-Dose PET}
For Low-Dose PET reconstruction, we used an in-house PET datasets contains 103 subjects of whole-body imaging. 
The low-dose data were obtained by resampling the original data, simulating various acquisition times. The dose reduction factor (DRF) quantifies the data acquired within a reduced time window, reflecting the degree of radiation dose. 
 The dataset were divided into training set (82 cases) and testing set (20 cases). The resolution of PET images is $192 \times 192$ pixels.

\subsection{Implementation Details and Evaluation Metrics}
For the network hyperparameter, we applied 4 residual blocks symmetrically in encoder and decoder paths, where each residual block consists of 2 AMSS blocks. The basic embedding channel is 180, with a multiplication factor $\{1, 2, 2, 2\}$ from shallow to deep. 
We trained our proposed MambaMIR and MambaMIR-GAN on two NVIDIA A100 (80GB) and tested them on an NVIDIA RTX 3090 GPU (24GB). 

Both MambaMIR and MambaMIR-GAN were trained using Adam optimiser for 100,000 gradient steps with a batch size of 8. The balancing parameters $\alpha$, $\beta$, $\gamma$ and $\eta$ were set to 15, 0.1, 0.0025 and 0.1. The initial learning rate was set to 0.0002, with a decay rate of 0.5 every 20,000 steps after 50,000$^\text{th}$ step.
Specifically for MambaMIR-GAN, we applied MambaMIR as generator and applied a U-Net-based discriminator~\citep{Schonfeld2020UNet} for adversarial training.

Three metrics including Peak Signal-to-Noise Ratio (PSNR), Structural Similarity Index Measure (SSIM) and Learned Perceptual Image Patch Similarity (LPIPS)~\citep{Zhang2018LPIPS} were applied for reconstruction quality assessment.

\begin{table*}[!t]
  \caption{
  Quantitative results for comparisonal studies for fast MRI, sparse-view CT (SVCT) and low-dose PET (LDPET) reconstruction.
  For fast MRI, experiments are performed on FastMRI at accelerate factor (AF) $\times 4$, $\times 8$, as well as SKMTEA at AF $\times 8$, $\times 16$. 
  For SVCT, experiments are conducted on the abdomen and chest subsets from Low-Dose CT Image and Projection Datasets.
  For LDPET, experiments are conducted on in-house dataset with dose reduction factor (DRF) $\times 3$, $\times 6$. 
  The best scores are indicated by \textbf{bold}. 
  $^{\star}$ denotes results that are significantly different from the best results by the Mann-Whitney Test ($p<0.05$).
  }
  \centering
  \resizebox{0.85\textwidth}{!}{
    \begin{tabular}{ccccccc}
    \toprule
    \multirow{2}[4]{*}{Method} & \multicolumn{3}{|c}{AF $\times$ 4} & \multicolumn{3}{|c}{AF $\times$ 8} \\
\cmidrule{2-7}          & \multicolumn{1}{|c}{SSIM $\uparrow$}  & PSNR $\uparrow$  & LPIPS $\downarrow$ & \multicolumn{1}{|c}{SSIM $\uparrow$}  & PSNR $\uparrow$  & LPIPS $\downarrow$ \\
    \midrule
    ZF    & 0.609 (0.088)$^*$ & 26.13 (2.06)$^*$ & 0.338 (0.050)$^*$ & 0.482 (0.098)$^*$ & 22.75 (1.73)$^*$ & 0.504 (0.058)$^*$ \\
    D5C5  & 0.671 (0.101)$^*$ & 28.85 (2.73)$^*$ & 0.168 (0.034)$^*$ & 0.548 (0.111)$^*$ & 25.99 (2.14)$^*$ & 0.292 (0.039)$^*$ \\
    DAGAN & 0.651 (0.093)$^*$ & 27.53 (2.05)$^*$ & 0.216 (0.048)$^*$ & 0.530 (0.106)$^*$ & 25.19 (2.21)$^*$ & 0.262 (0.043)$^*$ \\
    MWCNN & 0.696 (0.099) & 29.47 (2.72) & 0.179 (0.048)$^*$ & 0.566 (0.122)$^*$ & 26.99 (2.51)$^*$ & 0.261 (0.052)$^*$ \\
    SwinMR & 0.680 (0.103)$^*$ & 29.27 (2.87)$^*$ & 0.160 (0.037)$^*$ & 0.568 (0.116)$^*$ & 26.98 (2.47)$^*$ & 0.254 (0.043)$^*$ \\
    STGAN & 0.686 (0.098)$^*$ & 28.94 (2.65)$^*$ & 0.111 (0.034)$^*$ & 0.594 (0.105)$^*$ & 26.90 (2.31)$^*$ & \textbf{0.155 (0.040)} \\
    DiffuseRecon & 0.686 (0.103)$^*$ & 29.31 (2.73)$^*$ & 0.180 (0.030)$^*$ & 0.581 (0.118)$^*$ & 27.40 (2.40)$^*$ & 0.287 (0.038)$^*$ \\
    \midrule
    MambaMIR & 0.699 (0.103) & \textbf{29.61 (2.86)} & 0.172 (0.051)$^*$ & 0.598 (0.113)$^*$ & \textbf{27.53 (2.52)} & 0.259 (0.061)$^*$ \\
    MambaMIR-GAN & \textbf{0.703 (0.101)} & 29.36 (2.77)$^*$ & \textbf{0.109 (0.037)} & \textbf{0.617 (0.108)} & 27.33 (2.38)$^*$ & \textbf{0.155 (0.044)} \\
    \midrule
    \multirow{2}[4]{*}{Method} & \multicolumn{3}{|c}{AF $\times$ 8} & \multicolumn{3}{|c}{AF $\times$ 16} \\
\cmidrule{2-7}          & \multicolumn{1}{|c}{SSIM $\uparrow$}  & PSNR $\uparrow$  & LPIPS $\downarrow$ & \multicolumn{1}{|c}{SSIM $\uparrow$}  & PSNR $\uparrow$  & LPIPS $\downarrow$ \\
    \midrule
    ZF    & 0.529 (0.047)$^*$ & 23.25 (1.11)$^*$ & 0.462 (0.025)$^*$ & 0.473 (0.045)$^*$ & 21.28 (1.130)$^*$ & 0.555 (0.027)$^*$ \\
    D5C5  & 0.623 (0.042)$^*$ & 26.30 (1.18)$^*$ & 0.242 (0.030)$^*$ & 0.550 (0.049)$^*$ & 23.29 (1.123)$^*$ & 0.371 (0.037)$^*$ \\
    DAGAN & 0.555 (0.045)$^*$ & 24.55 (1.06)$^*$ & 0.289 (0.038)$^*$ & 0.479 (0.048)$^*$ & 22.21 (1.052)$^*$ & 0.375 (0.037)$^*$ \\
    MWCNN & 0.579 (0.041)$^*$ & 26.83 (1.21)$^*$ & 0.247 (0.026)$^*$ & 0.489 (0.045)$^*$ & 24.51 (1.23)$^*$ & 0.329 (0.034)$^*$ \\    
    SwinMR & 0.601 (0.039)$^*$ & 27.17 (1.24)$^*$ & 0.230 (0.025)$^*$ & 0.497 (0.044)$^*$ & 24.46 (1.232)$^*$ & 0.318 (0.033)$^*$ \\
    STGAN & 0.648 (0.044)$^*$ & 26.89 (1.17)$^*$ & 0.138 (0.028) & 0.565 (0.051)$^*$ & 24.43 (1.190)$^*$ & 0.209 (0.038)$^*$ \\
    DiffuseRecon & 0.584 (0.032)$^*$ & 26.27 (1.09)$^*$ & 0.202 (0.025)$^*$ & 0.478 (0.036)$^*$ & 22.92 (1.103)$^*$ & 0.325 (0.034)$^*$ \\
    \midrule
    MambaMIR & 0.620 (0.038)$^*$ & \textbf{27.43 (1.25)} & 0.237 (0.027)$^*$ & 0.514 (0.043)$^*$ & \textbf{25.07 (1.279)} & 0.312 (0.035)$^*$ \\
    MambaMIR-GAN & \textbf{0.656 (0.046)} & 27.00 (1.20)$^*$ & \textbf{0.136 (0.026)} & \textbf{0.583 (0.053)} & 24.85 (1.239) & \textbf{0.198 (0.037)} \\
    \midrule
    \multirow{2}[4]{*}{Model} & \multicolumn{3}{|c}{Abdomen} & \multicolumn{3}{|c}{Chest} \\
\cmidrule{2-7}          & \multicolumn{1}{|c}{SSIM $\uparrow$}  & PSNR $\uparrow$  & LPIPS $\downarrow$ & \multicolumn{1}{|c}{SSIM $\uparrow$}  & PSNR $\uparrow$  & LPIPS $\downarrow$ \\
    \midrule
    FBP   & 0.716 (0.041)$^*$ & 31.87 (1.25)$^*$ & 0.410 (0.035)$^*$ & 0.550 (0.036)$^*$ & 28.52 (1.02)$^*$ & 0.440 (0.037)$^*$ \\
    Inter & 0.947 (0.008)$^*$ & 40.95 (0.97)$^*$ & 0.076 (0.018)$^*$ & 0.847 (0.025)$^*$ & 35.56 (1.02)$^*$ & 0.155 (0.031)$^*$ \\
    DDNet & 0.941 (0.010)$^*$ & 40.35 (1.00)$^*$ & 0.090 (0.016)$^*$ & 0.835 (0.027)$^*$ & 35.41 (1.00)$^*$ & 0.154 (0.023)$^*$ \\
    FBPConv & 0.929 (0.013)$^*$ & 38.14 (1.09)$^*$ & 0.175 (0.025)$^*$ & 0.801 (0.035)$^*$ & 34.13 (0.97)$^*$ & 0.357 (0.050)$^*$ \\
    IRadonMap & 0.968 (0.007)$^*$ & 43.32 (1.10)$^*$ & 0.060 (0.015)$^*$ & 0.868 (0.029)$^*$ & 36.84 (1.16)$^*$ & 0.135 (0.025)$^*$ \\
    RegFormer & 0.966 (0.007)$^*$ & 42.64 (1.13)$^*$ & 0.077 (0.021)$^*$ & 0.850 (0.035)$^*$ & 36.15 (1.13)$^*$ & 0.226 (0.062)$^*$ \\
    \midrule
    MambaMIR & \textbf{0.983 (0.005)} & \textbf{45.72 (1.31)} & 0.037 (0.016)$^*$ & \textbf{0.874 (0.042)} & \textbf{37.18 (1.52)} & 0.160 (0.031)$^*$ \\
    MambaMIR-GAN & 0.977 (0.006)$^*$ & 44.57 (1.29)$^*$ & \textbf{0.024 (0.008)} & 0.854 (0.045)$^*$ & 36.33 (1.56)$^*$ & \textbf{0.049 (0.010)} \\
    \midrule
    \multirow{2}[4]{*}{Model} & \multicolumn{3}{|c}{DRF $\times 3$} & \multicolumn{3}{|c}{DRF $\times 6$} \\
\cmidrule{2-7}          & \multicolumn{1}{|c}{SSIM $\uparrow$}  & PSNR $\uparrow$  & LPIPS $\downarrow$ & \multicolumn{1}{|c}{SSIM $\uparrow$}  & PSNR $\uparrow$  & LPIPS $\downarrow$ \\
    \midrule
    Subsampled & 0.963 (0.034)$^*$ & 38.61 (4.98)$^*$ & 0.027 (0.024)$^*$ & 0.923 (0.056)$^*$ & 34.59 (4.99)$^*$ & 0.068 (0.040)$^*$ \\
    U-Net & 0.951 (0.053)$^*$ & 39.26 (4.74)$^*$ & 0.027 (0.070)$^*$ & 0.965 (0.029)$^*$ & 37.99 (4.21)$^*$ & 0.027 (0.030)$^*$ \\
    REDCNN & 0.976 (0.024) & 40.37 (4.52)$^*$ & 0.010 (0.013) & 0.966 (0.030)$^*$ & 38.50 (4.49)$^*$ & 0.020 (0.023) \\
    SwinIR & 0.973 (0.026)$^*$ & 39.71 (4.12)$^*$ & 0.018 (0.016)$^*$ & 0.953 (0.039)$^*$ & 36.61 (3.87)$^*$ & 0.044 (0.024)$^*$ \\
    \midrule
    MambaMIR & \textbf{0.980 (0.020)} & \textbf{41.59 (4.71)} & 0.011 (0.017) & \textbf{0.971 (0.026)} & \textbf{39.52 (4.58)} & 0.020 (0.026) \\
    MambaMIR-GAN & 0.978 (0.020) & 41.16 (4.75)$^*$ & \textbf{0.007 (0.015)} & 0.970 (0.025) & 39.19 (4.57)$^*$ & \textbf{0.010 (0.020)} \\
    \bottomrule
    \end{tabular}%
    }
  \label{tab:TABLE_COMP}%
\end{table*}%


\begin{figure*}[!t]
\begin{subfigure}
    \centering
    \includegraphics[width=\linewidth]{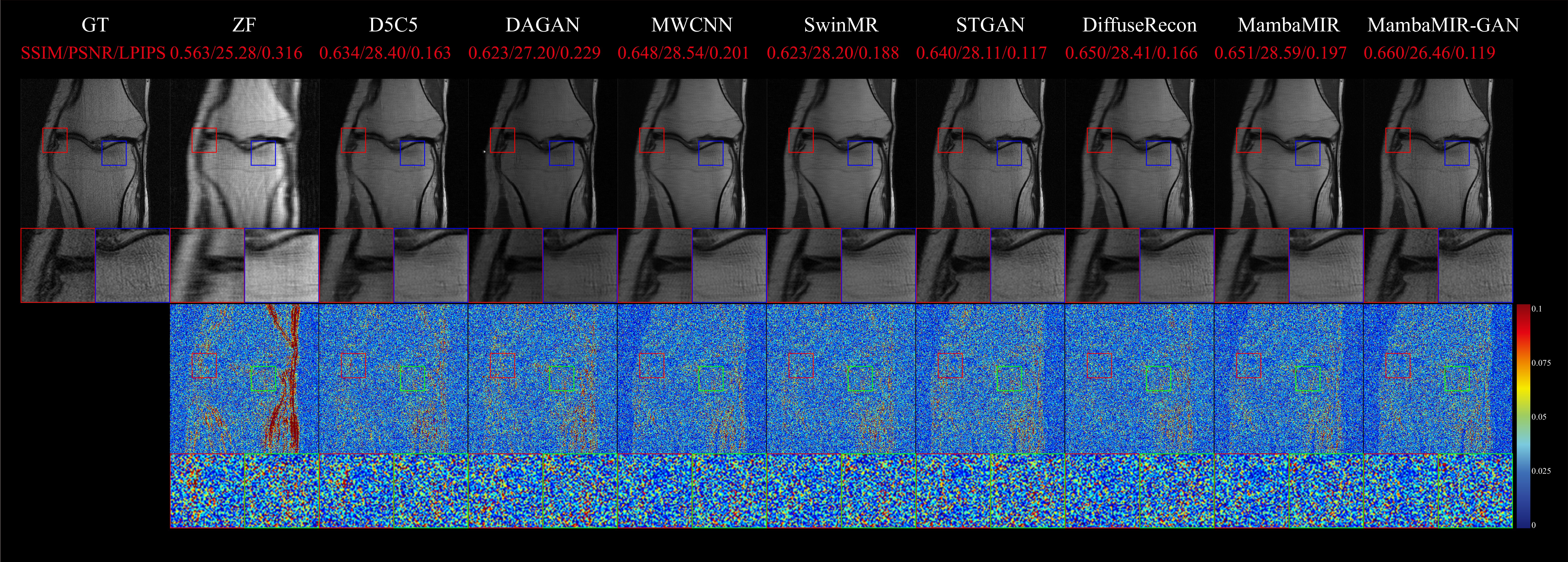}
    \caption{
    Visualised Results on FastMRI at AF $\times 4$. 
    Ground truth (GT), undersampled zero-filled (ZF) images, reconstruction results and corresponding error maps are presented.
    }
    \label{fig:FIG_VIS_MRI_FastMRI_AF4}
\end{subfigure}

\begin{subfigure}
    \centering
    \includegraphics[width=\linewidth]{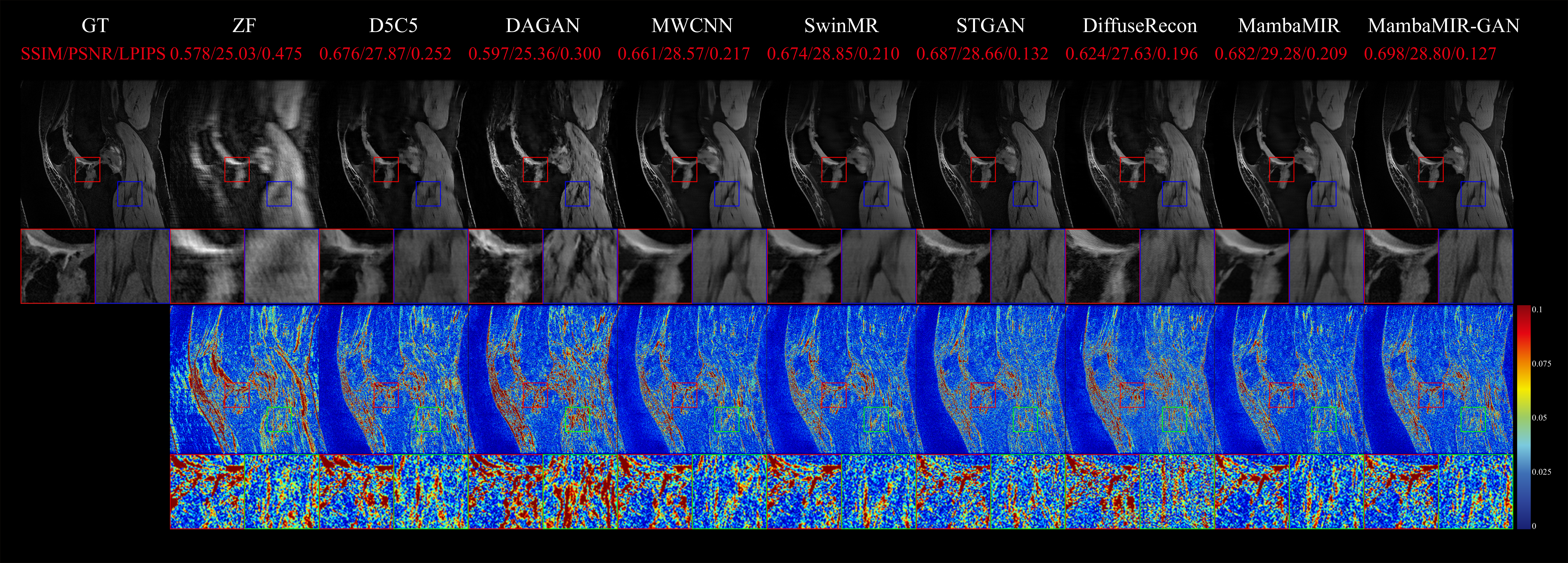}
    \caption{
    Visualised Results on SKMTEA at AF $\times 8$. 
    Ground truth (GT), undersampled zero-filled (ZF) images, reconstruction results and corresponding error maps are presented.
    }
    \label{fig:FIG_VIS_MRI_SKMTEA_AF8}
\end{subfigure}
\end{figure*}

\begin{figure*}[!t]
\begin{subfigure}
    \centering
    \includegraphics[width=\linewidth]{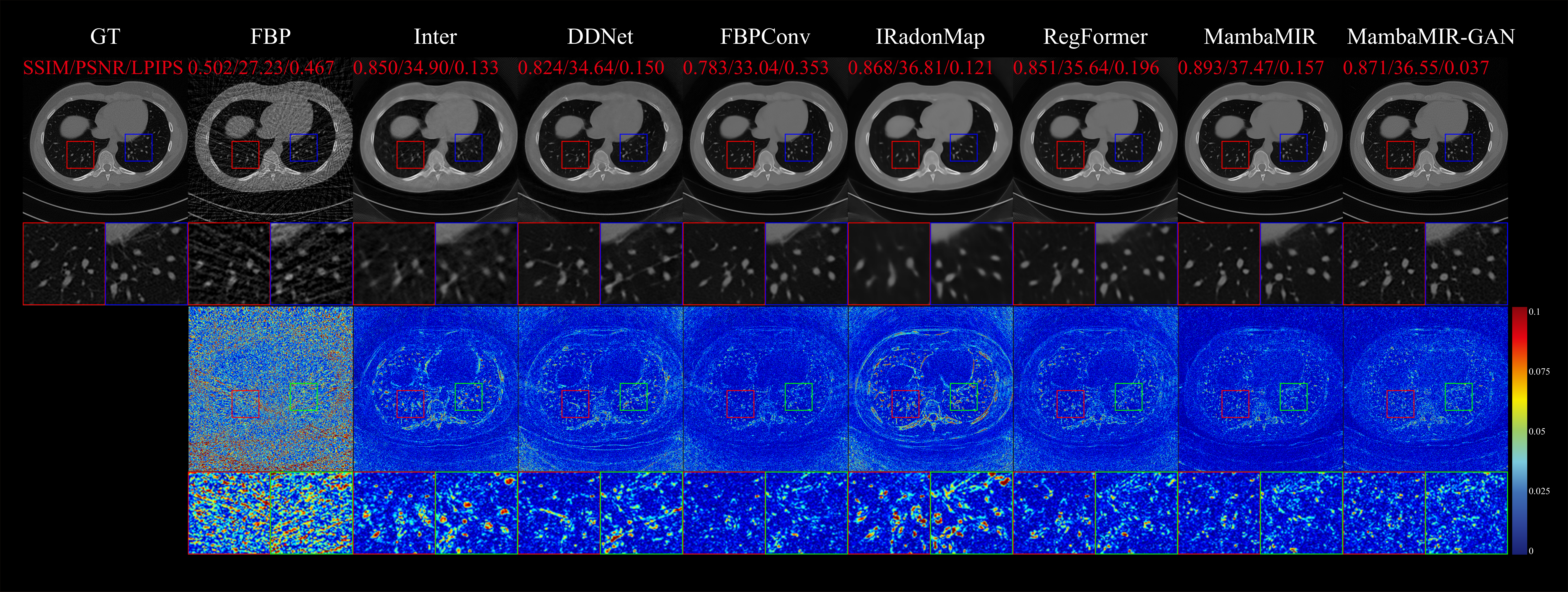}
    \caption{
    Visualised Results for SVCT on chest subset. 
    Ground truth (GT), sparse-view images reconstructed by Filtered Backprojection (FBP), reconstruction results and corresponding error maps are presented. CT images are normalised within the range of [-1024, 3096] HU for error map computation and display.
    }
    \label{fig:FIG_VIS_CT_CHEST}
\end{subfigure}

\begin{subfigure}
    \centering
    \includegraphics[width=\linewidth]{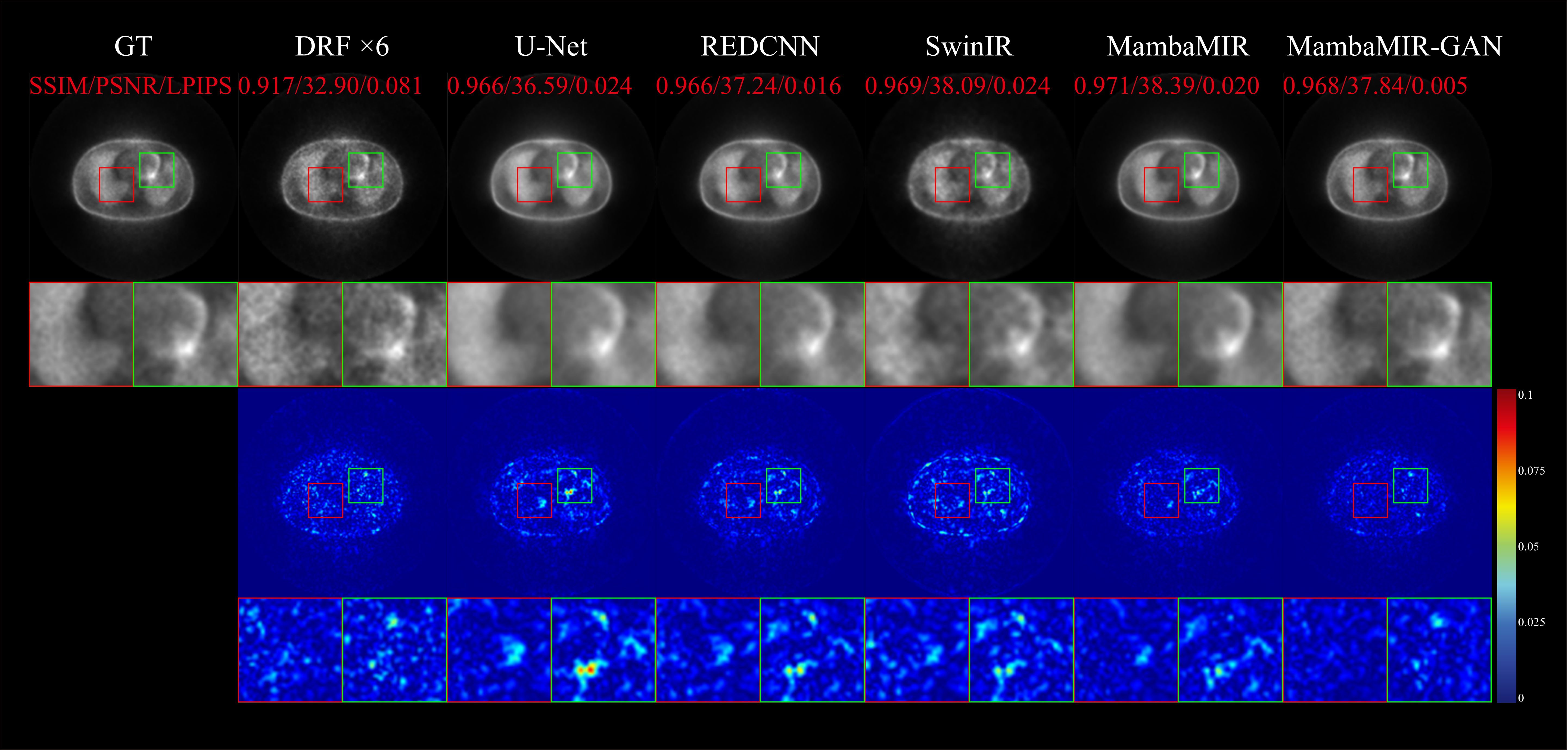}
    \caption{
    Visualised Results for LDPET at DRF $\times 6$. 
    Ground truth (GT), low-dose images, reconstruction results and corresponding error maps are presented.
    }
    \label{fig:FIG_VIS_PET_X6}
\end{subfigure}
\end{figure*}

\subsection{Comparisons with the SOTA}

In the experimental section, we compared our proposed MambaMIR and MambaMIR-GAN with baseline and SOTA methods for three different medical image reconstruction tasks, including fast MRI, SVCT and LDPET. 

Experiments for fast MRI included 
an unrolling-based method, D5C5~\citep{Schlemper2017D5C5}, 
an image enhancement-based and wavelet-coupled method, MWCNN~\citep{Liu2019Multi},
an image enhancement-based and Transformer-based method, SwinMR~\citep{Huang2022SwinMR}, 
GAN-based methods DAGAN~\citep{Yang2018DAGAN} and STGAN~\citep{Huang2022STGAN}, 
as well as a diffusion models-based method DiffuseRecon~\citep{Peng2022DiffuseRecon}.
Experiments were conducted on FastMRI at accelerate factors (AFs) of $\times4$ and $\times8$, and on SKMTEA at AFs of $\times8$ and $\times16$. 

For SVCT, image domain methods DDNet~\citep{zhang2018sparse} and FBPConv~\citep{jin2017deep}, a sinogram domain method View-Interpolation (Inter)~\citep{lee2017view}, a dual-domain method HDNet~\citep{hu2020hybrid}, a parameter-learnable inverse Radon transform IRadonMap~\citep{he2020radon}, and an unfolding method RegFormer~\citep{xia2023transformer} were included.
Results on abdomen and chest subsets were reconstructed from 60-views sinograms with uniform sampling. 

For LDPET, we compared the proposed method with 
image enhancement-based methods U-Net~\citep{Ronneberger2015UNet} and REDCNN~\citep{chen2017low}, 
as well as a Transformer-based SwinIR~\citep{Liang2021SwinIR}.
Quantitative results with DRF$\times3$ and DRF$\times6$ are presented.

All quantitative results for comparison studies can be found in Table~\ref{tab:TABLE_COMP}.
Visualised reconstruction samples can be found in Fig.~\ref{fig:FIG_VIS_MRI_FastMRI_AF4} for FastMRI at AF$\times4$, Fig.~\ref{fig:FIG_VIS_MRI_SKMTEA_AF8} for SKMTEA at AF$\times8$, Fig.~\ref{fig:FIG_VIS_CT_CHEST} for SVCT on chest scans, and Fig.~\ref{fig:FIG_VIS_PET_X6} for LDPET at DRF$\times6$.
According to the results, our proposed MambaMIR and MambaMIR-GAN achieve comparable results or outperform current SOTA methods, where MambaMIR tends to provide results with better reconstruction fidelity, while MambaMIR-GAN presents results with superior perceptual experience.

\subsection{Ablation Studies}

\subsubsection{Component Validity}

To evaluate the validity of each component in the network architecture, ablation experiments, with the removal of a single component in each run, were performed on the FastMRI at AF $\times8$ with a patch size of 2. 

According to TABLE~\ref{tab:TABLE_ABL_COMP}, we found that the utilisation of all components has a positive impact on the reconstruction results. 
Among them, the use of wavelet significantly improved the model performance with only a slight effect on the model size. The use of MLPs in AMSS blocks and self-attention modules in the bottleneck, as two common designs in the network, was also shown to be effective.

\subsubsection{Hyperparameter}

\begin{table}[htbp]
  \centering
  \caption{
    Ablation studies for model component validity conducted on FastMRI at AF $\times 8$.
    with a patch size of 2. 
    Structural Similarity Index Measure (SSIM) and the number of parameter (\#PARAMs) are reported to reflect the performance and model size.
    `FULL': standard MambaMIR;
    WAMSS: Wavelet-embedded Arbitrary-Masked State Space Blocks;
    WDown/WUp: Wavelet-based downsampling/upsampling modules;
    MLP: Multilayer perceptrons in AMSS Blocks;
    Attn.: multi-head self-attention modules in the bottleneck.
    }
    \begin{tabular}{lcc}
    \toprule
    \multicolumn{1}{c}{Settings} & SSIM  & \#PARAMs (M) \\
    \midrule
    FULL  & 0.5741 & 50.227 \\
    WAMSS $\rightarrow$ AMSS & -0.0017 & -0.000 \\
    WDown/WUp $\rightarrow$ Down/Up & -0.0020 & -1.328 \\
    w/o Wavelet (Two Lines Above) & -0.0027 & -1.328 \\
    w/o MLP & -0.0026 & -9.175 \\
    w/o Attn. & -0.0017 & -0.264 \\
    \bottomrule
    \end{tabular}
  \label{tab:TABLE_ABL_COMP}%
\end{table}%

Ablation studies in the hyperparameter setting were conducted and presented in Fig.~\ref{fig:FIG_PLOT_ABL_HYPER} (A), exploring the patch size, the resolution of random cropping during training, and the number of S6's latent space channels. 

Regarding the patch size, both the reconstruction performance (SSIM) and the computational cost (GFLOPs) increase as the patch size gets smaller. We can observe that the computational complexity of MambaMIR approximately increases linearly with the length of the sequence, consistent with theoretical predictions (patch size: $\times 2$, sequence length: $\times 4$, FLOPs: around $\times 4$). In our experiments, MambaMIR with patch size of 1 is applied for benchmarking for a fair comparison, while MambaMIR with patch size of 2 was used for ablation studies due to hardware limitation. 

In terms of the resolution of random cropping during training, the SSIM initially improves, reaching an optimal value before subsequently declining as the resolution increases, meanwhile, the FLOPs grow as the resolution increases. The optimal value exists since random cropping during training can be regarded as a data augmentation. We choose a resolution of $192 \times 192$ during training as a trade-off between performance and computational complexity. 

For the number of latent space channels (\#Channel) in S6,  both the reconstruction performance (SSIM) and the computational cost (FLOPs) increase as \#Channel increasing. As a trade-off between performance and computational complexity, \#Channel is set to 128.

\subsection{Transformer v.s. Mamba}

Mamba is regarded as a powerful competitor of Transformer. In this section, we further explored the comparison between our proposed MambaMIR and the Transformer-based counterparts.

To conduct a fair comparison, we replaced the AMS6 block in our proposed MambaMIR with multi-head self-attention (MSA) or shifted-window MSA (SWMSA), while preserving the rest components.
We use ``SxTy'' to indicate the Transformer-based counterpart, where x is the number of SWMSA and y is the number of MSA in both the encoder and the decoder. Since MSA has much more computational complexity than SWMSA, SWMSA is always applied in the shallower stage in our U-shape architecture. As reported in Fig.~\ref{fig:FIG_PLOT_ABL_HYPER} (B), the proposed MambaMIR outperforms all different Transformer-based counterparts and achieves a better Perception-Distortion Trade-off~\citep{Blau2018Perception}, while yielding a reasonable computational complexity. 

\begin{figure*}[!t]
    \centering
    \includegraphics[width=\linewidth]{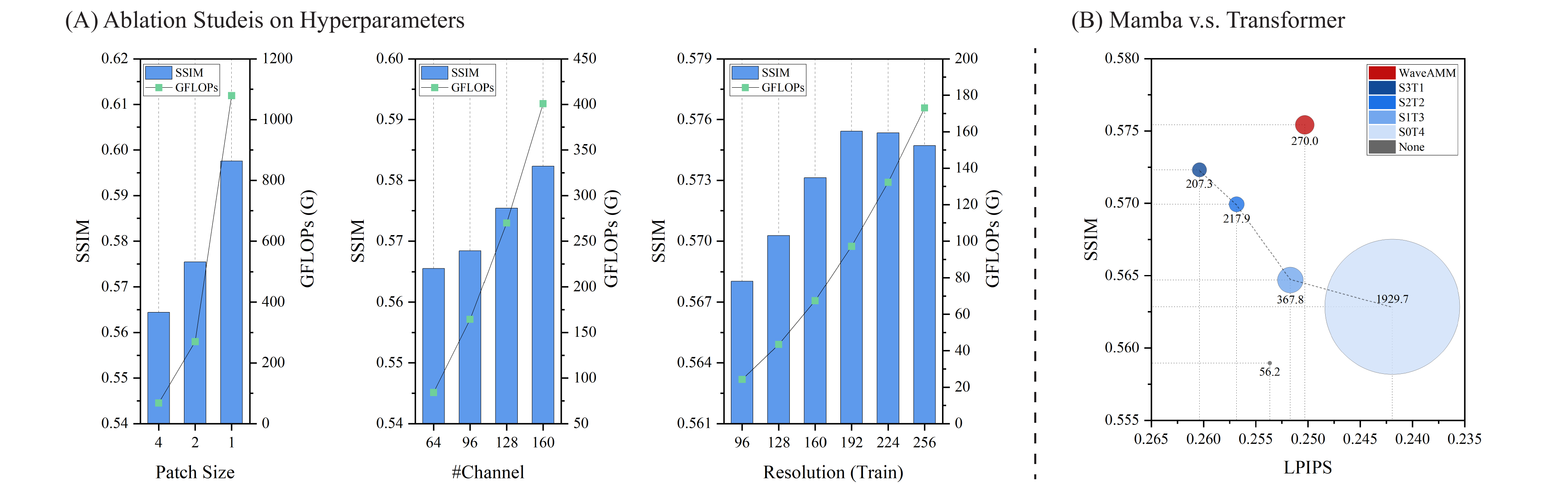}
    \caption{
    (A) Ablation studies on hyperparameters regarding the patch size, the randomly cropping resolution during training, and the number of S6's latent space channels (\#Channel);
    (B) Experiments between Mamba-based and Transformer-based models.
    The size of the data circle and the number below indicate the computational complexity (GFLOPs). 
    }
    \label{fig:FIG_PLOT_ABL_HYPER}
\end{figure*}

\section{Discussion}

In this paper, we have proposed MambaMIR, an innovative Mamba-based model, along with its GAN-based variant, MambaMIR-GAN, for joint medical image reconstruction and uncertainty estimation. 
According to TABLE~\ref{tab:TABLE_COMP}, MambaMIR-GAN tends to achieve the best LPIPS score and MambaMIR achieve the best PSNR score. 
It can be observed from the visualised example (Fig.~\ref{fig:FIG_VIS_CT_CHEST}, Fig.~\ref{fig:FIG_VIS_MRI_SKMTEA_AF8} and Fig.~\ref{fig:FIG_VIS_PET_X6}) that MambaMIR-GAN tends to produce more detailed texture information, while MambaMIR provides smoother recontruction to ensure reconstruction fidelity, which satisfies a Perception-Distortion Trade-off~\citep{Blau2018Perception}.
The experimental results have suggested that both MambaMIR and MambaMIR-GAN delivered superior performance in medical image reconstruction. In particular, MambaMIR tends to provide results with better reconstruction fidelity, while MambaMIR-GAN may provide reconstructions that better align with human perceptual qualities. 

An essential advantage of Mamba is the global sensitivity alongwith linear complexity. As Fig.~\ref{fig:FIG_VIS_ERF} illustrated, our MambaMIR yield a larger Effective Receptive Fields~\citep{Luo2016Understanding} compared to comparison methods, demonstrating its superior global sensitivity and long-range dependency.

In addition to the outstanding reconstruction results, our proposed MambaMIR can provide uncertainty maps by repeat sampling. These maps have visually represented the model's confidence in the reconstructed images, highlighting areas of potential uncertainty, which may signal regions with lower image quality or artefacts. 
As Fig.~\ref{fig:FIG_VIS_UNC} (A) illustrated, the high-uncertainty area for MRI knee reconstruction is located mainly in tissue with informative details (high-frequency area). 
For CT chest reconstruction, the high-uncertainty area indicated that edges of tissue and bones are with high-uncertainty.
For PET reconstruction, it can be observed that areas with higher radioactive concentrations demonstrate increased uncertainty. 
According to Fig.~\ref{fig:FIG_PLOT_Dropout} and Fig.~\ref{fig:FIG_VIS_UNC} (B), experimental results have shown that dropout leads to a consistent performance drop on different datasets with different dropout rates, the severity of which is positively correlated with the dropout rate. Compared to MC dropout, our MC-ASM mitigates the performance drop while providing reasonable uncertainty maps without the need for hyperparameter tuning.

\begin{figure}[ht]
\begin{subfigure}
    \centering
    \includegraphics[width=\linewidth]{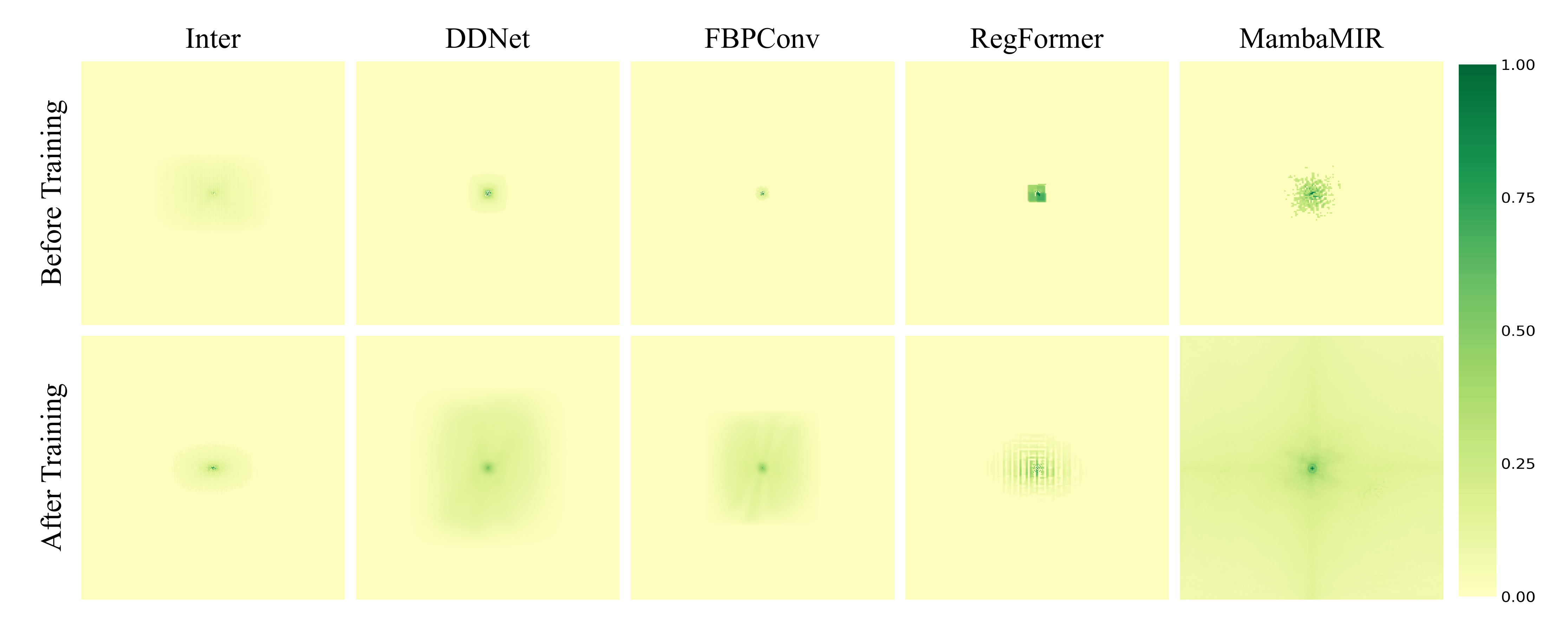}
    \caption{Comparison of Effective Receptive Fields before and after training between the proposed MambaMIR and other methods on for SVCT on abdomen subset.}
    \label{fig:FIG_VIS_ERF}
\end{subfigure}

\begin{subfigure}
    \centering
    \includegraphics[width=\linewidth]{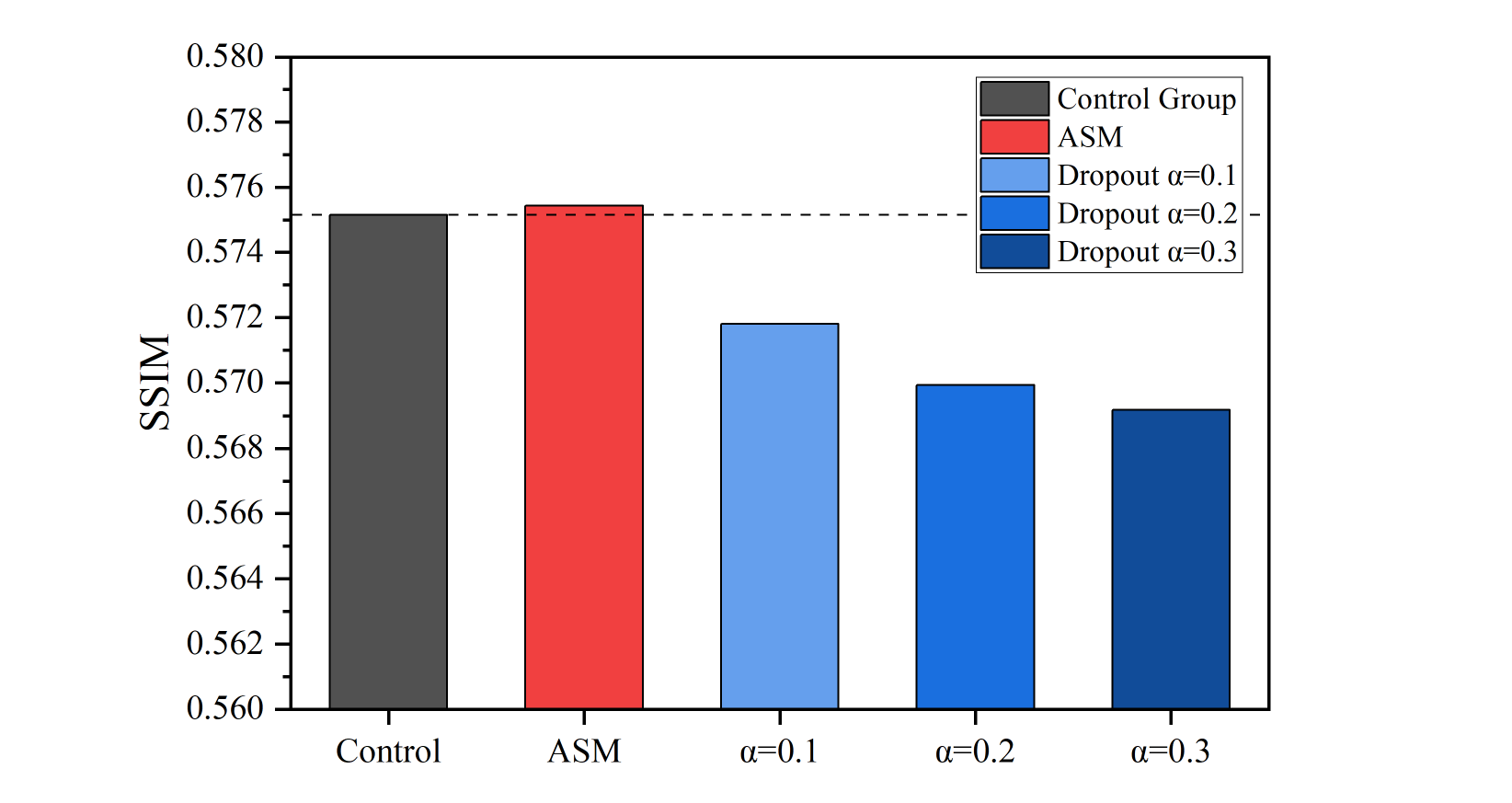}
    \caption{
    Quantitative comparison on FastMRI dataset between 1) MambaMIR without MC-ASM or MC dropout (control group), 2) MambaMIR with MC-ASM and 3) MambaMIR with MC Dropout using differet dropout rate.
    }
    \label{fig:FIG_PLOT_Dropout}
\end{subfigure}
\end{figure}

\begin{figure*}[!t]
    \centering
    \includegraphics[width=\linewidth]{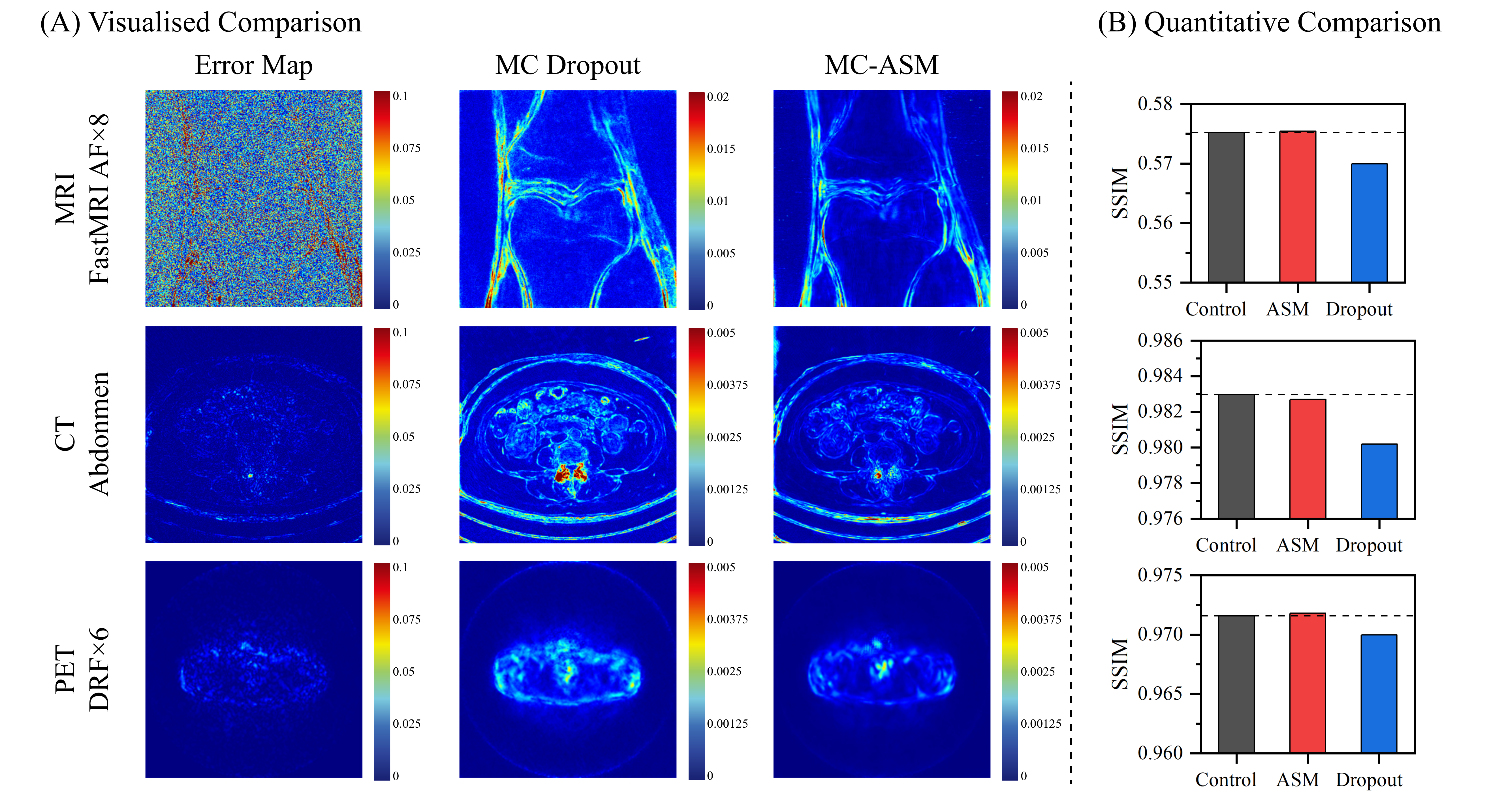}
    \caption{
    (A) Visualised samples of uncertainty maps provided by MC dropout ($\alpha=0.2$) and our MC-ASM, along with the corresponding error maps.
    (B) Quantitative comparison between 1) MambaMIR without MC-ASM or MC dropout (control group), 2) MambaMIR with MC-ASM and 3) MambaMIR with MC Dropout ($\alpha=0.2$) on three datasets.
    }
    \label{fig:FIG_VIS_UNC}
\end{figure*}

Comprehensive ablation experiments have been performed to evaluate the validity of model components. 
The use of Wavelet (both WDown/WUp and WAMSS) has shown significant beneficial to reconstruction with few network size increases.
Original VMamba~\citep{Liu2024VMamba} discarded the paradigm of \textit{Norm $\rightarrow$ Attention $\rightarrow$ Norm $\rightarrow$ MLP} from Vision Transformer~\citep{Dosovitskiy2020ViT}, instead they only used \textit{Norm $\rightarrow$ Vision State Space Block} for lighter network architecture. 
According to Fig.~\ref{fig:FIG_PLOT_ABL_HYPER}, ablation studies have shown that it is still necessary to retain MLP in our MambaMIR for medical image reconstruction, although it largely increases network size. 
In addition, the utilisation of self-attention module in the bottleneck effectively has improved the performance without significantly increasing the model size.

The choice of hyperparameter across the patch size, the training resolution, as well as the number of deep embedding channels, have also been evaluated and presented in Fig.~\ref{fig:FIG_PLOT_ABL_HYPER} (A).
Patch size is an essential parameter for low-level tasks such as medical image reconstruction.
Smaller patches can capture finer details since they focus on smaller regions, ensuring better reconstruction fidelity. However, the computational cost increases significantly when the number of patches to process multiplies~\citep{Huang2022SwinMR}. 
Our ablation studies have shown a similar trend in the relationship between the choice of patch size, the resulting reconstruction performance, and the computational cost. 
Moreover, it can be observed that the computational complexity of MambaMIR approximately increases linearly with the length of the sequence, which is consistent well with theoretical predictions of Mamba's complexity. 

In terms of the number of latent space channels (\#Channel) in S6, typically, for low-level tasks, the performance increases and finally becomes stable (or drops) as \#Channel increases. 
A larger latent space can provide a model with a higher capacity to encode complicated features and details of the input images, which generally allows for more detailed and accurate reconstructions, especially for complex images with a lot of variance. However, models with a very high-dimensional latent space can risk overfitting, particularly if the training data is limited or not diverse enough, which leads to a performance drop.
Ablation studies have shown that both the model performance and the corresponding computation cost increase as \#Channel increases, however, no performance drop has been observed when \#Channel is large. This reflects that our proposed MambaMIR has the potential to have better performance with higher-dimensional latent space.

Random cropping during model training is a common data augmentation technique that can significantly affect the final results of the model. Typically, random cropping helps to prevent overfitting by ensuring that the model does not learn to rely on specific features located in particular parts of an image. While random cropping increases robustness and generalisation, it might also lead to a loss of important contextual information when the cropping region is too small. 
Ablation studies have shown that the reconstruction performance initially improves, reaching an optimal value before subsequently declining as the training resolution increases, where the optimal performance is a balance of data augmentation and information richness for a single crop.

To further explore the potential of the Mamba-based model as a competitor of the Transformer, a fair comparison between our MambaMIR and its Transformer counterparts has been conducted. 
As illustrated in Fig.~\ref{fig:FIG_PLOT_ABL_HYPER} (B), it can be observed that applying self-attention (without the window mechanism) in the shallow and high-resolution stage is extremely computationally expensive due to its quadratic complexity. From `S3T1' to `S0T4', more self-attention modules are applied, which leads to an extensive growth of computational complexity, meanwhile pushing the balance from better fidelity to better perception following a Perception-Distortion Trade-off~\citep{Blau2018Perception}. 
The proposed MambaMIR has shown superiority over all different Transformer-based counterparts in terms of two trade-offs: Perception-Distortion Trade-off and Performance-Complexity Trade-off.

\section{Conclusion}
\label{sec:conclusion}
In conclusion, our proposed MambaMIR and MambaMIR-GAN represent significant advances in the field of medical image reconstruction.
The proposed generalised framework has been achieved superior performance on fast MRI, SVCT and LDPET, which proves its scalability and potential for other reconstruction applications such as ultrasound or low-dose CT reconstruction.
The proposed MC-ASM mechanism provides reliable uncertainty estimation without the need for hyperparameter tuning and mitigates performance drop. 

Future studies may investigate the scalability of these models for various imaging modalities and their potential for computational efficiency.

\section*{Acknowledgments}
This study was supported in part by the ERC IMI (101005122), the H2020 (952172), the MRC (MC/PC/21013), the Royal Society (IEC/NSFC/211235), the NVIDIA Academic Hardware Grant Program, the SABER project supported by Boehringer Ingelheim Ltd, NIHR Imperial Biomedical Research Centre (RDA01), Wellcome Leap Dynamic Resilience, and the UKRI Future Leaders Fellowship (MR/V023799/1). 





\bibliographystyle{model2-names.bst}\biboptions{authoryear}
\bibliography{refs}



\end{document}